\documentclass[journal]{IEEEtran}

\ifCLASSINFOpdf

\else

\fi

\usepackage{amsmath}
\usepackage{amssymb}
\usepackage{multirow} 
\usepackage{amsthm}
\usepackage{graphicx}
\usepackage[numbers]{natbib} 
\usepackage{booktabs}
\usepackage{array}
\usepackage{makecell}
\usepackage{thmtools}
\usepackage{url}
\newtheorem{theorem}{Theorem}
\newtheorem{definition}{Definition}
\newtheorem{corollary}{Corollary}
\newtheorem{lemma}{Lemma}

\begin{document}

\title{Tur\'{a}n-Theoretic Bounds on Several Elementary Trapping Sets in LDPC Codes}

\author{
\vspace{5mm}
\IEEEauthorblockN{
Ziyang Zhao\IEEEauthorrefmark{2}\IEEEauthorrefmark{3},
Haoran Xiong\IEEEauthorrefmark{3}\IEEEauthorrefmark{4},
Zicheng Ye\IEEEauthorrefmark{3}\IEEEauthorrefmark{4}, 
Luyi Li\IEEEauthorrefmark{3}\IEEEauthorrefmark{4}, 
and
Guiying Yan\IEEEauthorrefmark{2}\IEEEauthorrefmark{3}\IEEEauthorrefmark{4}}

\IEEEauthorblockA{\IEEEauthorrefmark{2}School of 
Advanced Interdisciplinary Sciences, University of Chinese 
Academy of Sciences}

\IEEEauthorblockA{\IEEEauthorrefmark{3}Academy of Mathematics and Systems Science, CAS}

\IEEEauthorblockA{\IEEEauthorrefmark{4}School of Mathematical Sciences, University of Chinese 
Academy of Sciences}

\IEEEauthorblockA{Email: zhaoziyang251@mails.ucas.ac.cn,  \{xionghaoran, yezicheng\}@amss.ac.cn, liluyiplus@gmail.com, yangy@amss.ac.cn}

\IEEEauthorblockA{Corresponding Author: Guiying Yan \quad Email:yangy@amss.ac.cn}

\thanks{This work was supported by the National Key Research and Development Program of China under Grant 2023YFA1009602.}
}

 \maketitle

\begin{abstract}
LDPC codes have attracted significant attention due to their capacity-approaching performance. Elementary trapping sets are the main cause of the error floor phenomenon in LDPC codes. We investigate several graph structures associated with trapping sets, including theta graphs, dumbbell graphs, and short cycles with chords. Based on the Tur\'{a}n numbers of $\theta(2,2,2)$, $\theta(1,3,3)$ and $D(4,4;0)$, we prove that any $(a,b)$-ETS in a variable-regular Tanner graph with girth $g=8$ and variable degree $\gamma$ satisfies the inequality $b\geq a\gamma-\frac{a(\sqrt{24a-23}-1)}{4}$, provided that any two 8-cycles in the Tanner graph do not share common variable node. In addition, we can also eliminate ETSs by removing certain short-cycle structures with chords. The lower bounds on the minimum size of ETSs through these methods are improved. To assess practical impact, we analyze spectral radii of the ETSs and construct QC-LDPC codes to show frame error rates in the error floor region.
\end{abstract}

\begin{IEEEkeywords}
LDPC codes, error floor, Tanner graph, elementary trapping sets, Tur\'{a}n number
\end{IEEEkeywords}

\IEEEpeerreviewmaketitle

\section{Introduction}

Low-density parity-check (LDPC) codes are capacity-approaching error-correcting codes widely used in modern communication and storage systems, including Wi-Fi, optical/microwave links, and 5G enhanced mobile broadband (eMBB) channels \cite{GallagerLDPC},\cite{ChunLDPC}. However, under iterative decoding, LDPC codes suffer from the error floor phenomenon\cite{ChilaER}. This behavior is mainly caused by certain harmful substructures in the Tanner graph, known as trapping sets, among which elementary trapping sets(ETSs) are regarded as the most detrimental \cite{cole2006},\cite{Milenkovic}.

Considerable effort has been devoted to eliminating trapping sets in Tanner graphs. As approximating the minimum trapping-set size is NP-hard \cite{McGregor}, exhaustive enumeration lacks reliable worst-case guarantees. Consequently, structural constraints on Tanner graphs are widely adopted to exclude small ETSs. Restrictions on short cycles—such as 8-cycles in girth-8 quasi-cyclic LDPC codes \cite{NaseriSiAm,TaoXiongfe} or short cycles with chords \cite{AmirzadePho, NaseriBaniha}—have proven effective in improving minimum distance and error floor performance, with further bounds derived using Tur\'{a}n numbers of theta graphs \cite{XiongYeYan} and variable-regular degree constraints \cite{HashemAmir}. Edge coloring and extremal graph techniques have also been applied to trapping set analysis and code construction \cite{Amirzade, Sadeghi, Kari}. These developments motivate the application of extremal graph theory to derive new theoretical bounds and practical design criteria for LDPC codes.

The structure of this paper is as follows: Section II introduces the essential definitions and notations required for our analysis. In Section III, we determine the Tur\'{a}n numbers for several special graphs, including theta graphs, dumbbell graphs, and short cycles with chords. Section IV ties the mathematical results from Section III to coding theory, enabling us to establish inequalities for the parameters $a$, $b$, and $\gamma$ associated with an $(a, b)$-ETS in a Tanner graph with a variable-regular degree of $d_L(v)=\gamma$. This mathematical foundation allows us theoretically to show the effect of eliminating various small ETSs. In Section V, we exhibit construction examples that employ our proposed methods and the corresponding numerical results. Section VI concludes this paper.

\section{Preliminaries}

\subsection{Graph theory}

A graph $G=(V,E)$ consists of a non-empty vertex set $V$ and an edge set $E$ as unordered pairs of vertices. Two vertices $u$ and $v$ are adjacent, denoted as $uv\in E$ if they are connected by an edge. The neighborhood of a vertex $v$ is defined as $N_G(v) = \{ u \in V \mid uv \in E \}$ which is the set of vertices adjacent to $v$. Similarly, for a subset $S \subseteq V$, its neighborhood is denoted by $N(S)$. The subgraph induced by $S$ is defined as $G[S] = (S, E')$, where $E' = \{ vw \in E \mid v, w \in S \}.$ A graph without loops or multiple edges is simple. 
A simple graph is complete, denoted $K_n$, if there is an edge connecting every pair of its $n$ vertices. Additionally, a simple graph is bipartite if its vertices can be partitioned into two sets $V_1$, $V_2$ with no edge within the same set. A bipartite graph is complete, denoted by $K_{m,n}$, if every vertex in $V_1$ is connected to every vertex in $V_2$, with $m$ and $n$ vertices in each part respectively. In particular, the complete bipartite graph $K_{1,n-1}$ is called an $n$-vertex star graph, denoted by $S_n$.

The degree of a vertex $u$ is defined as $d_G(u) = |N_G(u)|$, representing the number of vertices adjacent to $u$. The complement of $G$ is defined as $\overline{G} = (V, \overline{E})$, where $\overline{E}$ is the complement of the edge set $E$. For an $n$-vertex graph $G$ with vertex set $V = \{ v_1, v_2, \dots, v_n \}$, the degree sequence is denoted by $\pi_G = (d_1\leq d_2\leq \dots\leq d_n)$ where $d_i =d_G(v_i)$. Here, $d_1$ and $d_n$ are the minimum degree and maximum degree of $G$, respectively.

The graph $P_{k} = (V, E)$ with $V = \{x_0, x_1, x_2, \cdots, x_k\}$ and $E = \{x_0 x_1, x_1 x_2, \cdots, x_{k-1} x_k\}$, where all vertices $x_i$ are distinct, is called a path of length $k$. The vertices $x_0$ and $x_k$ are the endpoints of the path $P$. We say that two or more paths are internally disjoint if they have no common vertices except the endpoints. The distance $d(u, v)$ between $u$ and $v$ is defined as the length of a shortest path connecting $u$ and $v$. The diameter of a graph $G$ is defined as $\operatorname{diam}(G) = \max \{d(u, v) \mid u, v \in V(G)\}$. A cycle of length $k$, denoted by $C_k$, is a path that starts and ends at the same vertex $x_0 = x_k$ with all other vertices in the sequence being distinct. The girth of a simple graph is the length of its shortest cycle, denoted by $g$. The following definitions introduce theta graphs\cite{theta} and dumbbell graphs, which are the main graphs considered in this paper.

\begin{definition}
    A theta graph, denoted by $\theta(a, b, c)$, is formed by three internally disjoint paths with the same pair of endpoints, of lengths $a$, $b$, $c$ respectively, where $1 \leq a \leq b \leq c$.
\end{definition}

\begin{definition}
    A dumbbell graph, denoted by $D(a, b; q)$, is formed by two vertex-disjoint cycles $C_a$ and $C_b$ connected by a path of length $q$. In particular, when $q = 0$, $D(a, b; 0)$ indicates that $C_a$ and $C_b$ share one common vertex.
\end{definition}

Two graphs are isomorphic if there exists a bijection between their vertex sets that preserves adjacency and non-adjacency. If a graph $G$ does not contain a subgraph isomorphic to $H$, then $G$ is said to be $H$-free. We now introduce the concept of the Tur\'{a}n number \cite{turan}.

\begin{definition}
     Let $\mathcal{F}$ be a non-empty family of graphs. The Tur\'{a}n number $ex(n, \mathcal{F})$ is defined as the maximum number of edges in an $n$-vertex $\mathcal{F}$-free graph. An $n$-vertex graph attaining this maximum number of edges is called an $\mathcal{F}$-free extremal graph of order $n$. In particular, when $\mathcal{F}$ consists of a single graph $G$, we simply write $ex(n, G)$.
\end{definition}

For two graphs $G$ and $H$, we write $G = H$ to mean that they are isomorphic. The disjoint union and join of $G$ and $H$ are denoted by $G+H$ and $G\lor H$, respectively. The notation $mH$ stands for the disjoint union of $m$ copies of $H$. For a graph $G = (V, E)$ and a vertex $v \in V$, $G - v$ denotes the graph obtained from $G$ by deleting the vertex $v$ together with all edges incident to $v$. 

\subsection{LDPC code}

A Tanner graph $G = (L \cup R, E)$ is a bipartite graph derived from the parity-check matrix $H$ of an LDPC code, which is introduced in \cite{Tanner1}.  In this graph, $L$ and $R$ denote the sets of variable nodes and check nodes, corresponding to the columns and rows of $H$, respectively. The $i$-th check node is adjacent to the $j$-th variable node if and only if $H(i,j) = 1$. The Tanner graph is variable-regular if every variable node $v\in L$ has the same degree $d_L(v)=\gamma$.

For a lifting degree $p$, the parity-check matrix $H$ of a QC-LDPC code can be represented by $p \times p$ circulant permutation matrices as follows:
$$
H=
\begin{bmatrix}
I(p_{1,1}) & I(p_{1,2}) & I(p_{1,3}) & \cdots & I(p_{1,\eta}) \\
I(p_{2,1}) & I(p_{2,2}) & I(p_{2,3}) & \cdots & I(p_{2,\eta}) \\
\vdots & \vdots & \vdots & \ddots & \vdots \\
I(p_{\gamma,1}) & I(p_{\gamma,2}) & I(p_{\gamma,3}) & \cdots & I(p_{\gamma,\eta})
\end{bmatrix},
$$
where $I(p_{i,j})$ represents a $p \times p$ circulant permutation matrix with lifting value $p_{i,j} \in \{0,1,2,\dots,p-1,\infty\}$. For $0 \leq r \leq p-1$, $I(p_{i,j})$ has a $1$ at position $(r, (r+p_{i,j}) \bmod p)$ and $0$ elsewhere. In particular, $I(\infty)$ represents a $p \times p$ zero matrix.

A QC-LDPC code is fully connected if and only if its parity-check matrix $H$ contains no $I(\infty)$ entry. The $\gamma \times \eta$ matrix formed by the lifting values is called the exponent matrix $E$, where $E(i,j)=p_{i,j}$. Furthermore, we define the base matrix $B$ with $B(i,j)=1$ if $p_{i,j}\neq \infty$, and $B(i,j)=0$ otherwise.

The definitions of trapping set and VN graph\cite{KoetterVN} are given below.

\begin{definition}
    An $(a,b)$ trapping set is an induced subgraph $G[S \cup N(S)]$ generated by $S \subseteq L$ and $N(S) \subseteq R$ where $a=|S|$ represents the size and $b$ is the number of vertices in $N(S)$ with odd degree. Furthermore, if all check nodes have degrees of either $1$ or $2$, it is called an elementary trapping set (ETS).
\end{definition}

\begin{definition}
    For a given ETS, define the variable-node (VN) graph as $G_{\mathrm{VN}} = (V_{\mathrm{VN}}, E_{\mathrm{VN}})$. The vertex set $V_{\mathrm{VN}}$ consists of all variable nodes, and the edge set $E_{\mathrm{VN}}$ consists of the pairs of variable nodes that are connected by a check node with degree 2.
\end{definition}

For a variable-regular LDPC code where every variable node $v \in L$ has the same degree $d_L(v) = \gamma$, there is a correspondence between all elementary trapping sets and their VN graphs. Let $G_{\mathrm{VN}} = (V_{\mathrm{VN}}, E_{\mathrm{VN}})$ be the VN graph of an $(a, b)$-ETS. It follows that $|V_{\mathrm{VN}}| = a$ and $|E_{\mathrm{VN}}| = \frac{1}{2}(a\gamma - b)$. When the Tanner graph has girth $g > 4$, the VN graph of ETS contains neither multiple edges nor loops, satisfying the conditions of a simple graph.

To analyze the decoding behavior within an ETS, \citet{Butler1} introduced a linear state-space model. The system matrix associated with an ETS is defined as follows:

\begin{definition} 
For a given $(a,b)$-ETS, let $G$ be its corresponding VN graph. Define $D(G)=\{(u,v),(v,u) \mid uv \in E(G)\}$. The system matrix $A_{sys}$ is a $|D(G)| \times |D(G)|$ matrix whose rows and columns correspond to ordered pairs in $D(G)$. $A_{sys}(G)(i,j) = 1$ if and only if the second component of ordered pair $j$ equals the first component of ordered pair $i$, and the first component of $j$ is not equal to the second component of $i$; otherwise, $A_{sys}(G)(i,j) = 0$.
\end{definition}

The system matrix $A_{sys}$ characterizes the main message-passing process within the ETS, and its spectral radius $\rho(A_{sys})$ is an important parameter influencing the decoding behavior. \citet{Butler1} demonstrated that when errors occur in an ETS, a larger spectral radius leads to slower error reduction and may even result in decoding failure. Therefore, we regard the spectral radius of $A_{sys}$ as a key metric for assessing the harmfulness of an ETS to the error-floor performance of LDPC codes: a larger spectral radius indicates a more detrimental ETS.

\section{Theoretical Result}

This section focuses on Tur\'{a}n numbers for some special graphs.

\begin{figure}[!t]
\centering
\includegraphics[width=3.2in]{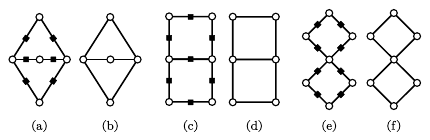}
\caption{Figures (a) and (b) depict two 8-cycles that share two common check nodes and the corresponding VN graph $\theta(2, 2, 2)$, respectively. Figures (c) and (d) depict two 8-cycles with a shared check node and the corresponding VN graph $\theta(1, 3, 3)$, respectively. Figures (e) and (f) depict two 8-cycles with a shared variable node and the corresponding VN graph $D(4,4;0)$, respectively. }
\label{fig1}
\end{figure}

\subsection{Tur\' {a}n Numbers of Theta Graph and Dumbbell Graph}
Given the hardness of determining the exact Tur\'{a}n number for bipartite graphs, we focus on deriving an upper bound for the Tur\'{a}n numbers of $\theta(2,2,2)$ and $\{C_3, \theta(2,2,2), \theta(1,3,3), D(4,4;0)\}$.

Let $ \overline{d} = \frac{1}{n} \sum_{v \in V} d_G(v) $ denote the average degree, $ \sigma^2 = \sum_{v \in V} [\overline{d} - d_G(v)]^2 $ the variance of the degree, and $ N_i $ the number of unordered pairs of vertices at distance $ i $. If $\mathrm{diam}(G)$ satisfies $\mathrm{diam}(G) \leq j$, then $\binom{n}{2} = \sum_{i=1}^j N_i$.

\begin{theorem}\label{th1}
    For all $n \geq 5$, $ex(n,\theta(2,2,2))\leq\frac{n(\sqrt{8n-7}+1)}{4}$. 
\end{theorem}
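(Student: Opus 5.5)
The plan is a standard double count of paths of length two (``cherries''), followed by Jensen's inequality applied to the degree sequence. The key observation is that $\theta(2,2,2)$ consists of two endpoints joined by three internally disjoint paths of length $2$. Such paths have distinct middle vertices, so $\theta(2,2,2)$ is exactly $K_{2,3}$. Consequently, a graph $G$ is $\theta(2,2,2)$-free if and only if every unordered pair of distinct vertices $\{u,w\}$ has at most two common neighbours.

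First I count the number $P$ of paths $u\,v\,w$ of length $2$ in an $n$-vertex $\theta(2,2,2)$-free graph $G$ with $e$ edges, in two ways. Grouping by the middle vertex $v$ gives $P=\sum_{v\in V}\binom{d_G(v)}{2}$. Grouping by the endpoint pair $\{u,w\}$ gives $P=\sum_{\{u,w\}}|N_G(u)\cap N_G(w)|\le 2\binom{n}{2}=n(n-1)$. Hence
\[
\sum_{v\in V}\binom{d_G(v)}{2}\le n(n-1).
\]
One could refine this by splitting the pairs according to their distance, using the quantities $N_i$. Pairs at distance at least $3$ contribute $0$, and adjacent pairs can contribute $2$ common neighbours only if they lie in triangles. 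The crude bound $2\binom n2$ already suffices for the stated inequality, so I will not need that refinement.

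Next I apply convexity. The function $x\mapsto\binom{x}{2}=x(x-1)/2$ is convex, so Jensen's inequality gives $\sum_v\binom{d_G(v)}{2}\ge n\binom{\overline d}{2}$. Equivalently, writing $\sum_v d_G(v)^2=n\overline d^{\,2}+\sigma^2\ge n\overline d^{\,2}$, we can use the variance $\sigma^2$ defined above. Combining this with the previous step yields $\overline d(\overline d-1)\le 2(n-1)$, that is, $\overline d^{\,2}-\overline d-2(n-1)\le 0$. Solving the quadratic gives $\overline d\le \frac{1+\sqrt{1+8(n-1)}}{2}=\frac{1+\sqrt{8n-7}}{2}$. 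Since $e=\frac{n\overline d}{2}$, this gives $e\le \frac{n(\sqrt{8n-7}+1)}{4}$, which is the bound in Theorem~\ref{th1}.

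I do not expect a real obstacle. The only points needing care are the identification of $\theta(2,2,2)$ with $K_{2,3}$ (so that ``at most two common neighbours'' is exactly the forbidden condition) and checking that Jensen's inequality is valid for real $\overline d$. The hypothesis $n\ge 5$ is only there so that $K_{2,3}$ can appear at all; for smaller $n$ the statement is trivial or vacuous. It plays no role in the computation.
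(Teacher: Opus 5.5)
Your proof is correct and is essentially the paper's argument. The paper also bounds $\sum_{v}\binom{d_G(v)}{2}\le 2(N_1+N_2)\le 2\binom{n}{2}$ from the fact that every pair has at most two common neighbours, then uses $\sigma^2\ge 0$ (equivalent to your Jensen step) to solve the same quadratic. Its preliminary claim that an extremal graph has diameter at most $3$ plays no role in the final inequality, since $N_3\ge 0$ is simply discarded, so your version, which applies directly to any $\theta(2,2,2)$-free graph, is a slight streamlining rather than a different route.
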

\begin{IEEEproof}
    Let $G=(V,E)$ be an extremal graph with $|V|=n$ and $|E|=m$, containing no $\theta(2,2,2)$. We claim that the diameter of $G$ is at most 3. Otherwise, there exists a pair of vertices $(x,y)$ such that $d(x,y) \geq 4$. Adding the edge $xy$ to $E(G)$ would not create a new $C_4$, hence $G+xy$ would remain $\theta(2,2,2)$ free, which is a contradiction since $G$ is assumed to be an extremal graph. 
    
    Given the diameter is at most 3, it follows that $\binom{n}{2} = N_1 + N_2 + N_3.$ Without $\theta(2,2,2)$, any common neighbors of a pair with distance 1 or 2 are at most two. Then, we obtain $2(N_1 + N_2) \geq \sum_{v \in V} \binom{d_G(v)}{2}$.
    
     Using the notations introduced earlier, we obtain $\sum_{v \in V} \binom{d_G(v)}{2} = \frac{1}{2} ( \sum_{v \in V} d_G(v)^2 ) - m$ and $\sigma_G^2 = \sum_{v \in V} d_G(v)^2 - \frac{4m^2}{n}$. Combining the above equalities, we further derive $\binom{n}{2} = N_1 + N_2 + N_3 \geq \frac{\sigma_G^2}{4} + \frac{m^2}{n} - \frac{m}{2} \geq \frac{m^2}{n} - \frac{m}{2}$, which implies that $m \leq \frac{n(\sqrt{8n-7}+1)}{4}. $
\end{IEEEproof}

\begin{theorem}\label{th2}
    For all $n \geq 5$ and $\mathcal{H}=\{C_3, \theta(2,2,2), \theta(1,3,3), D(4,4;0)\}$, 
    $$ex(n,\mathcal{H})\leq\frac{n(\sqrt{24n-23}-1)}{8}.$$
\end{theorem}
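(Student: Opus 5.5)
The plan is to reuse the counting scheme of Theorem~\ref{th1}, this time exploiting all four forbidden configurations. We will obtain a slightly stronger intermediate bound, $m\le n^{3/2}/2$, and then check that it implies the stated one. Let $G$ be $\mathcal{H}$-free with $n$ vertices and $m$ edges, and count cherries, i.e.\ paths of length 2, via $\sum_{v}\binom{d_G(v)}{2}$.

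\textbf{Step 1 (structure).} Since $G$ is $C_3$-free, two adjacent vertices have no common neighbour. So every cherry joins a pair at distance exactly 2, and $\sum_v\binom{d_G(v)}{2}=\sum_{\{x,y\}:\,d(x,y)=2}|N(x)\cap N(y)|$. Because $G$ is $\theta(2,2,2)$-free, each such pair has at most two common neighbours. A pair with exactly two common neighbours $b,d$ spans a $C_4$ $xbyd$, of which $\{x,y\}$ is a diagonal.

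The graph $\theta(1,3,3)$ is exactly two $C_4$'s sharing an edge, and $D(4,4;0)$ is two $C_4$'s sharing one vertex. I will also check that two $C_4$'s sharing two non-adjacent vertices contain $\theta(2,2,2)$, and that two $C_4$'s sharing three vertices contain either $\theta(2,2,2)$ or a triangle. It then follows that the $C_4$'s of $G$ are pairwise vertex-disjoint, so there are at most $q\le n/4$ of them. Each $C_4$ has exactly two diagonal pairs, so at most $2q\le n/2$ pairs have two common neighbours.

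\textbf{Step 2 (counting).} Write $N_2$ for the number of pairs at distance 2, and note that $N_2\le\binom n2-m$. Then
\[
\sum_v\binom{d_G(v)}{2}\le N_2+2q\le \binom n2-m+\frac n2 .
\]
By convexity, $\sum_v\binom{d_G(v)}{2}\ge \frac{2m^2}{n}-m$, exactly as in the proof of Theorem~\ref{th1}. Combining the two bounds gives
\[
\frac{2m^2}{n}\le\frac{n(n-1)}{2}+\frac n2=\frac{n^2}{2},
\]
that is, $m\le n^{3/2}/2$. This step does not use any extremality or diameter argument.

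\textbf{Step 3 (comparison).} The inequality $\frac{n^{3/2}}{2}\le \frac{n(\sqrt{24n-23}-1)}{8}$ is equivalent to $4\sqrt n+1\le\sqrt{24n-23}$. Squaring, this becomes $n-\sqrt n-3\ge 0$, which holds for all $n\ge 6$.

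\textbf{Main obstacle: the case $n=5$.} Here Step 3 just fails, and a direct check is needed. The target bound is $5(\sqrt{97}-1)/8\approx 5.53$, so it suffices to show $m\le 5$. I would argue as follows. A triangle-free graph on 5 vertices with 6 edges either is bipartite, and then is $K_{2,3}$ (which contains $\theta(2,2,2)$) or one of the small bipartite graphs with sides $1$ and $4$, which have too few edges. Or it contains a $C_5$ plus a chord, and any chord of a $C_5$ creates a triangle. Since a side of size 1 allows at most 4 edges, bipartite graphs with 6 edges must have sides 2 and 3, i.e.\ be $K_{2,3}$. Hence $m\le5$ when $n=5$, which completes the proof.

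The only genuinely delicate point is Step 1's claim that the $C_4$'s are vertex-disjoint. This requires running through every way two 4-cycles can overlap (in an edge, in one vertex, in two opposite vertices, or in three vertices) and confirming that each overlap yields a member of $\mathcal{H}$.
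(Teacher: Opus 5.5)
Your proof is correct, and it takes a genuinely different route from the paper's. The paper works with an extremal graph and uses edge-addition to get $\operatorname{diam}(G)\le 3$. It then bounds the number $|Q|$ of pairs with two common neighbours \emph{relative} to $N_2$, asserting $|Q|\le N_2/2$ from $\theta(1,3,3)$- and $D(4,4;0)$-freeness. This gives $\sum_v\binom{d_G(v)}{2}\le\frac32 N_2$, hence $\binom n2\ge m+N_2\ge\frac{4m^2}{3n}+\frac m3$, and solving this quadratic yields the stated bound exactly and uniformly for all $n\ge5$. You instead bound $|Q|$ \emph{absolutely} by $n/2$ using the vertex-disjointness of $4$-cycles. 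The correction is then only a lower-order term, and you get $m\le n^{3/2}/2$. This is asymptotically stronger (constant $\frac12$ versus $\frac{\sqrt6}{4}$) and holds for every $\mathcal H$-free graph, with no extremality or diameter argument.

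Your route also makes explicit something the paper leaves implicit. The inequality $|Q|\le N_2/2$ is false for arbitrary $\mathcal H$-free graphs: two disjoint $C_4$'s have $|Q|=N_2=4$. It needs connectivity together with essentially your vertex-disjointness observation, so the paper's diameter claim is doing hidden work there. The price of your approach is that your bound is not literally of the stated form, so you need the comparison in Step 3 and the small case.

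Two small remarks. First, your overlap case list should include two distinct $C_4$'s on the same four vertices; their union is $K_4$, so a triangle appears. Second, $n=5$ needs no separate structural check: $m\le 5^{3/2}/2\approx 5.59$ together with integrality already gives $m\le 5<5(\sqrt{97}-1)/8$.
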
 
\begin{IEEEproof}
    Let $G=(V,E)$ be an extremal graph with $|V|=n$ and $|E|=m$, containing none of $\mathcal{H}$. We claim that the diameter of $G$ is at most 3. Otherwise, there exists a pair of vertices $(x,y)$ such that $d(x,y) \geq 4$. Adding the edge $xy$ to $E(G)$ would not create a new $C_3$ and $C_4$, hence $G+xy$ would remain $\mathcal{H}$ free, which is a contradiction since $G$ is assumed to be an extremal graph. 
    
    Given the diameter is at most 3, it follows that $\binom{n}{2} = N_1 + N_2 + N_3$. Let $Q = \{ \{x,y\} \mid |N(x) \cap N(y)| = 2 \}$ denote the set of vertex pairs sharing exactly two common neighbors. As the graph is $ \theta(2,2,2)$-free, we have $\sum_{v \in V} \binom{d_G(v)}{2} = N_2 + |Q|$. Without $\theta(1,3,3)$ and $D(4,4;0)$, we obtain $|Q| \leq \frac{N_2}{2}$. Furthermore, we conclude that $\sum_{v \in V} \binom{d_G(v)}{2} \leq \frac{3N_2}{2}$. Using the notations introduced earlier, we obtain $\sum_{v \in V} \binom{d_G(v)}{2} = \frac{1}{2} ( \sum_{v \in V} d_G(v)^2 ) - m$ and $\sigma_G^2 = \sum_{v \in V} \bar{d}^2 - 2 \sum_{v \in V} \bar{d} d_G(v) + \sum_{v \in V} d_G(v)^2 = \sum_{v \in V} d_G(v)^2 - \frac{4m^2}{n}$.  Combining the above equalities, we further derive $\binom{n}{2} = N_1 + N_2 + N_3 \geq m + \frac{\sigma_G^2}{3} + \frac{4m^2}{3n} - \frac{2m}{3} \geq \frac{4m^2}{3n} + \frac{m}{3}$, which implies that $m \leq \frac{n(\sqrt{24n-23}-1)}{8}$. 
\end{IEEEproof}

\subsection{Tur\'{a}n numbers of Short Cycles with Chords}

\begin{figure}[!t]
\centering
\includegraphics[width=2.9in]{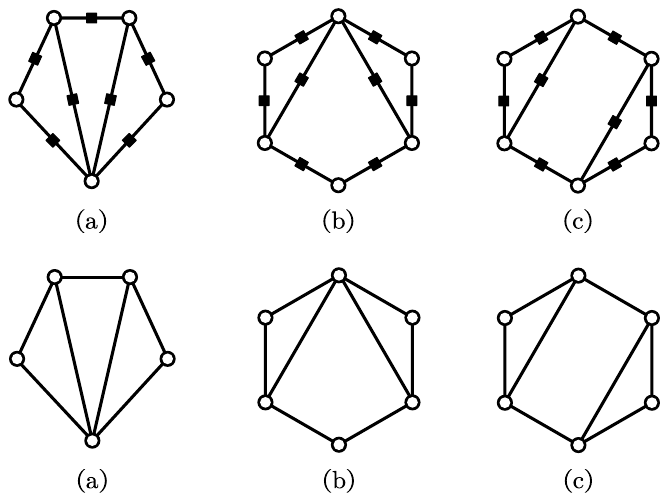}
\caption{Figures (a), (b) and (c) depict three short cycles with chords $G_1$, $G_2$ and $G_3$, respectively.}
\label{fig2}
\end{figure}

This subsection primarily presents the exact results on the Tur\'{a}n numbers for the three short cycles with chords $G_1$, $G_2$, and $G_3$. Additionally, we provide specific characterizations of the extremal graphs for $G_1$ and $G_3$.

In order to further determine the Tur\'{a}n numbers for some special graphs, We first prove the following lemma: 

\begin{lemma}\label{lem1}
    Let $n\geq3$ be a positive integer and $f: \mathbb{N} \to \mathbb{N}$ a function. Define  
$$
F(n) = (n-1)\bigl(f(n+1)+1\bigr) - (n+1)f(n).
$$  

For a given graph $H$, if $ex(n, H) = f(n)$, then the following statements hold:

\begin{enumerate}
    \item If $F(n)>0$, then $ex(n+1, H) \leq f(n+1)$.
    \item If $F(n)=0$ or -1, then $ex(n+1, H)\leq f(n+1)+1$. Moreover, the equality holds if and only if there exists an $n+1$-vertex $H$-free graph $G$ with minimum degree $f(n+1)-f(n)+1$.
    \item If $F(n)\geq -1$ and $\operatorname{ex}(n+1,H)=f(n+1)$, then every
$(n+1)$-vertex extremal $H$-free graph $G$ satisfies $\delta(G)=f(n+1)-f(n)$.
\end{enumerate}
\end{lemma}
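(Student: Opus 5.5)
The approach is the standard vertex-deletion averaging argument. Let $G$ be an $H$-free graph on $n+1$ vertices with $m$ edges, and let $\delta=\delta(G)$. Deleting a vertex of minimum degree gives an $n$-vertex $H$-free graph, so
\[
m-\delta\le f(n).
\]
Since $G$ has minimum degree $\delta$, we also have $2m\ge (n+1)\delta$. Combining the two gives $m\le f(n)+\delta\le f(n)+\frac{2m}{n+1}$, and therefore
\[
(n-1)m\le (n+1)f(n),
\]
which is the only inequality the three parts need.

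For statement 1, I would argue by contradiction. Suppose $m\ge f(n+1)+1$. Then $(n-1)(f(n+1)+1)\le (n+1)f(n)$, that is, $F(n)\le 0$, which contradicts $F(n)>0$.

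For statement 2, suppose first that $m\ge f(n+1)+2$. Then
\[
(n-1)(f(n+1)+2)\le (n+1)f(n),
\]
so $F(n)+(n-1)\le 0$. With $F(n)\ge -1$ and $n\ge 3$ this gives $n\le 2$, a contradiction; hence $ex(n+1,H)\le f(n+1)+1$. For the equality characterization, take $m=f(n+1)+1$.
\begin{itemize}
\item Necessity: we have $\delta\ge m-f(n)=f(n+1)-f(n)+1$. Also $\delta\le \lfloor 2m/(n+1)\rfloor$, and $F(n)\in\{0,-1\}$ means $(n-1)m\in\{(n+1)f(n),\,(n+1)f(n)-1\}$. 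This pins down $2m$ relative to $(n+1)(m-f(n))$: we get $2m=(n+1)(m-f(n))$ or $2m=(n+1)(m-f(n))-1$, so $\lfloor 2m/(n+1)\rfloor\le m-f(n)$. Therefore $\delta=m-f(n)=f(n+1)-f(n)+1$.
\item Sufficiency: an $H$-free graph on $n+1$ vertices with $\delta=f(n+1)-f(n)+1$ has at least $m$ edges. To see this, delete a minimum-degree vertex and compare with the edge count. One route is the degree-sum bound $2m\ge(n+1)\delta$ together with $F(n)\ge -1$, rearranging to $(n+1)(f(n+1)-f(n)+1)\ge 2(f(n+1)+1)-1$ and taking ceilings. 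Combined with the upper bound, this forces the edge count to equal $f(n+1)+1$.
\end{itemize}

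For statement 3, let $m=f(n+1)$. The deletion bound gives $\delta\ge f(n+1)-f(n)$. For the reverse, use $\delta\le 2m/(n+1)$ together with $F(n)\ge -1$, which rewrites as
\[
(n-1)f(n+1)\ge (n+1)f(n)-n.
\]
If $\delta\ge f(n+1)-f(n)+1$, then
\[
(n+1)\bigl(f(n+1)-f(n)+1\bigr)\le 2f(n+1),
\]
that is, $(n-1)f(n+1)\le (n+1)f(n)-(n+1)$. This contradicts the previous inequality, so $\delta=f(n+1)-f(n)$.

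The main obstacle is the integer bookkeeping in the sufficiency direction of statement 2. That is where the exact values $F(n)\in\{0,-1\}$ must be used to convert the minimum-degree hypothesis into the required edge count via ceiling arithmetic. If the degree-sum estimate is not sharp enough there, I would fall back to reading the equivalence as "an extremal graph with $f(n+1)+1$ edges exists exactly when such a minimum-degree graph exists." I would then prove the two implications using the forced equality $\delta=m-f(n)$ from the necessity argument.
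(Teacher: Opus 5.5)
Your proposal is correct and is essentially the paper's argument: delete a minimum-degree vertex to get $m-\delta\le f(n)$, play this against the degree sum $2m\ge(n+1)\delta$, and package the result as the single inequality $(n-1)m\le(n+1)f(n)$; you also prove the converse half of the ``if and only if'' in part 2 (via the ceiling argument), which the paper's proof omits. One small slip: when $F(n)=-1$ the identity is $2m=(n+1)(m-f(n))+1$, not $-1$, but since $n+1\ge 4$ the floor of $2m/(n+1)$ is still $m-f(n)$, so your conclusion $\delta=f(n+1)-f(n)+1$ stands.
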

\begin{IEEEproof}
    We first prove 1). Suppose that a graph $G$ has $n+1$ vertices and $f(n+1)+1$ edges. We show that $G$ must contain a subgraph isomorphic to $H$. First, if every vertex $v$ in $G$ satisfies $d(v) \geq f(n+1) - f(n) + 1$, then the number of edges in $G$ satisfies
    $
    |E(G)| = \frac{1}{2} \sum_{v \in V(G)} d(v) \geq \frac{1}{2} (n+1) \bigl(f(n+1) - f(n) + 1\bigr). 
    $
    On the other hand, we have $(n-1)(f(n+1)+1) - (n+1)f(n) > 0$, which implies $f(n) < \frac{n-1}{n+1}(f(n+1)+1)$. Consequently, $|E (G)| \geq \frac{n+1}{2}(f(n+1)-f(n)+1)>f(n+1) + 1$, contradicting the assumption. Therefore, there must exist a vertex $v_0$ in $G$ whose degree satisfies $d(v_0) \leq f(n+1) - f(n)$. Consider the graph $G - v_0$, which has $|V(G-v_0)| = n$ vertices and $|E(G-v_0)| \geq f(n) + 1$ edges. Since $ex(n, H) = f(n)$, the graph $G-v_0$ contains a subgraph isomorphic to $H$, and hence so does $G$.

    For 2), suppose that a graph $G$ has $n+1$ vertices and $f(n+1)+2$ edges. We show that $G$ must contain a subgraph isomorphic to $H$. First, if every vertex $v$ in $G$ satisfies $d(v) \geq f(n+1) - f(n) + 2$, then the number of edges in $G$ satisfies $|E(G)| = \frac{1}{2} \sum_{v \in V(G)} d(v) \geq \frac{1}{2} (n+1) \bigl(f(n+1) - f(n) + 2\bigr). $ On the other hand, we have $(n-1)(f(n+1)+1) - (n+1)f(n) = k$ with $k=0,-1$, which implies $f(n)=\frac{n-1}{n+1}(f(n+1)+1)-\frac{k}{n+1}$. Consequently, $|E(G)| > f(n+1) + 2$, contradicting the assumption. Therefore, there must exist a vertex $v_0$ in $G$ whose degree satisfies $d(v_0) \leq f(n+1) - f(n) + 1$. Consider the graph $G - v_0$, which has $|V(G-v_0)| = n$ vertices and $|E(G-v_0)| \geq f(n) + 1$ edges. Since $ex(n, H) = f(n)$, the graph $G-v_0$ contains a subgraph isomorphic to $H$, and hence so does $G$. Suppose that a graph $G$ has $n+1$ vertices and $f(n+1)+1$ edges. We show that the minimum degree of $G$ must be $f(n+1)-f(n)+1$. First, we can easily obtain $\delta(G)\geq f(n+1)-f(n)+1$; otherwise, the graph G with the vertex of minimum degree removed must contain a subgraph isomorphic to H. Secondly, we can obtain that $\delta(G)\leq f(n+1)-f(n)+1$; otherwise, this contradicts $|E(G)|=f(n+1)+1$. In conclusion, we have proven 2).
    
    For 3), suppose that a graph $G$ has $n+1$ vertices and $f(n+1)$ edges. We show that minimum degree of $G$ must be $f(n+1)-f(n)$. Otherwise, if every vertex $v$ in $G$ satisfies $d(v) \geq f(n+1) - f(n) + 1$, then the number of edges in $G$ satisfies $|E(G)| = \frac{1}{2} \sum_{v \in V(G)} d(v) \geq \frac{1}{2} (n+1) \bigl(f(n+1) - f(n)\bigr)$. We also have $(n-1)(f(n+1)+1) - (n+1)f(n) = k$ with $k\geq -1$, which implies $f(n)=\frac{n-1}{n+1}(f(n+1)+1)-\frac{k}{n+1}$. Consequently, $|E(G)| > f(n+1)$, contradicting the assumption. Therefore, there must exist a vertex $v_0$ in $G$ whose degree satisfies $d(v_0) \leq f(n+1) - f(n)$. On the other hand, if $d(v_0) \leq f(n+1) - f(n) -1$the graph G with the vertex of minimum degree removed must contain a subgraph isomorphic to H. Then we can obtain that $\delta(G)\leq f(n+1)-f(n)$. In conclusion, we have proven 3).
\end{IEEEproof}

Lemma~\ref{lemadd} determines the exact values of $ex(5, G_1)$, which is used to derive $ex(n, G_1)$.

\begin{restatable}{lemma}{GaddLemma}\label{lemadd}
$ex(5, G_1) = 7$, and the $5$-order extremal graph is $K_2\vee \overline{K_3}$ , $(K_2+K_1)\vee \overline{K_2}$ or the graph $G'_1$ generated by adding an edge to $K_4$. 
\end{restatable}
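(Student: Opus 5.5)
The plan is a finite, direct verification on $5$ vertices. A $5$-vertex graph with $m$ edges is determined by its complement, which has $10-m$ edges, and complements with two or three edges are few and easy to enumerate. I will read $G_1$ off Fig.~\ref{fig2}(a). The candidate I will check against is a $5$-cycle with two chords at a common vertex, i.e.\ the fan $K_1\vee P_4$ with $7$ edges. If the figure shows a different chorded $5$-cycle, the same scheme applies and only the containment checks change.

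\emph{Step 1: $ex(5,G_1)\le 7$.} Take any $5$-vertex graph $G$ with $8$ edges, so $\overline{G}$ has exactly two edges. Either these two edges are disjoint or they share a vertex. In each case I will exhibit $G_1$ explicitly.
\begin{itemize}
\item If the missing edges are $12$ and $34$, then vertex $5$ is adjacent to all others. Also $G[\{1,2,3,4\}]$ is the $4$-cycle $1\!-\!3\!-\!2\!-\!4\!-\!1$, which contains a $P_4$. Vertex $5$ together with this path gives $G_1$.
\item If the missing edges are $12$ and $13$, then vertex $4$ is dominating. The path $2\!-\!3\!-\!5\!-\!1$ lies in $G-4$, and again $G_1$ appears.
\end{itemize}
Since $G_1$ has $7$ edges, no graph with more than $8$ edges needs a separate argument: deleting edges down to $8$ preserves any copy of $G_1$.

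\emph{Step 2: the three listed graphs are $G_1$-free.} A copy of $G_1$ on $5$ vertices spans all of them, so it needs a vertex adjacent to the other four, and the rest must contain a $P_4$.
\begin{itemize}
\item $(K_2+K_1)\vee\overline{K_2}$ and $K_4$ plus a pendant edge have no dominating vertex, so they are $G_1$-free.
\item In $K_2\vee\overline{K_3}$, the dominating vertices are the two vertices of the $K_2$. Removing either one leaves a star $K_{1,3}$, which has no $P_4$.
\end{itemize}
Together with Step 1, this gives $ex(5,G_1)=7$.

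\emph{Step 3: uniqueness of the extremal graphs.} This is where the real work is. Complements with three edges on $5$ vertices are, up to isomorphism:
\[
K_3,\quad P_4,\quad K_{1,3},\quad P_3+K_2,\quad 3K_2\ \text{(impossible on 5 vertices)},\quad P_3+P_2 \text{ variants}.
\]
For each class I either exhibit a dominating vertex whose removal leaves a graph containing $P_4$, or identify the graph as one of the three listed ones. Concretely, I expect:
\begin{itemize}
\item complement $K_3$ gives $K_2\vee\overline{K_3}$;
\item complement $K_{1,3}$ gives $K_4$ plus a pendant edge;
\item complement $P_4$ gives $(K_2+K_1)\vee\overline{K_2}$;
\item complement $P_3+K_2$ leaves a dominating vertex whose deletion leaves a $P_4$, so it contains $G_1$.
\end{itemize}

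The main obstacle is making this three-edge enumeration complete and correct, with no isomorphism class of complements missed. I will organise it by the degree sequence of $\overline{G}$ to guarantee exhaustiveness. I will also double-check each dominating-vertex-plus-$P_4$ embedding against the actual drawing of $G_1$, in case $G_1$ turns out to be a different chorded $5$-cycle. In that case the "dominating vertex" test is replaced by a direct search for a Hamiltonian $5$-cycle carrying the required chords.
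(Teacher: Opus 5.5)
Your route is the same as the paper's: check every complement with two edges, then every complement with three edges. The paper simply asserts that both checks are easy. Your reading of $G_1$ as the fan $K_1\vee P_4$ (a hub joined to a $4$-vertex path) is consistent with the rest of the paper. Step~1 is correct, and so are your checks of $K_2\vee\overline{K_3}$ and $(K_2+K_1)\vee\overline{K_2}$, so $ex(5,G_1)=7$ goes through.

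The uniqueness part, however, is wrong as written, because two of your four predicted outcomes in Step~3 are swapped. The relevant fact is that $G$ has a vertex adjacent to all others exactly when $\overline{G}$ has an isolated vertex.
\begin{itemize}
\item The complement consisting of a $3$-edge path plus an isolated vertex gives $G=K_1\vee\overline{P_4}=K_1\vee P_4$. This is $G_1$ itself, not $(K_2+K_1)\vee\overline{K_2}$.
\item The complement $P_3+K_2$ (a $2$-edge path plus a disjoint edge) has no isolated vertex. So $G$ has no dominating vertex and cannot contain $G_1$. This $G$ is exactly $(K_2+K_1)\vee\overline{K_2}$, the graph you yourself declared $G_1$-free in Step~2.
\end{itemize}
Your Step~3 therefore contradicts your Step~2. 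The final list only comes out right because the two misidentifications cancel.

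Step~2 also contains a false justification. $K_4$ plus a pendant edge does have a dominating vertex, namely the $K_4$ vertex carrying the pendant edge, which has degree $4$. The graph is $G_1$-free because deleting that vertex leaves $K_3+K_1$, which has no $4$-vertex path. Equivalently, the pendant vertex has degree $1$, while a copy of $G_1$ in a $5$-vertex graph is spanning and has minimum degree $2$.

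With these repairs, Step~3 becomes four one-line checks:
\begin{itemize}
\item for the complement $K_3+2K_1$, deleting either dominating vertex leaves $K_{1,3}$;
\item for $K_{1,3}+K_1$, deleting the unique dominating vertex leaves $K_3+K_1$;
\item for $P_3+K_2$, there is no dominating vertex;
\item the $3$-edge path plus an isolated vertex gives $G_1$.
\end{itemize}
The first three give exactly the three listed graphs. These four classes are all the $3$-edge graphs on five vertices, since $3K_2$ needs six vertices.

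Drop the spurious ``$P_3+P_2$ variants'' entry. Also note that the paper writes $P_k$ for a path with $k$ edges, so your $P_4$ and $P_3$ are its $P_3$ and $P_2$.
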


\begin{IEEEproof}
    It is easy to check that the graph $G$ obtained by deleting any two edges from $K_5$ is not $G_1$-free. Hence $ex(5,G_1)\leq7$. On the other hand, among the graphs obtained by deleting any three edges from $K_5$, only $K_2\vee \overline{K_3}$ , $(K_2+K_1)\vee \overline{K_2}$ and the graph $G'_1$ generated by adding an edge to $K_4$ are $G_1$-free, which implies $ex(5,G_1)\geq7$. This proves Lemma~\ref{lemadd}.
    \begin{figure}[!h]
    \centering
    \includegraphics[width=1in]{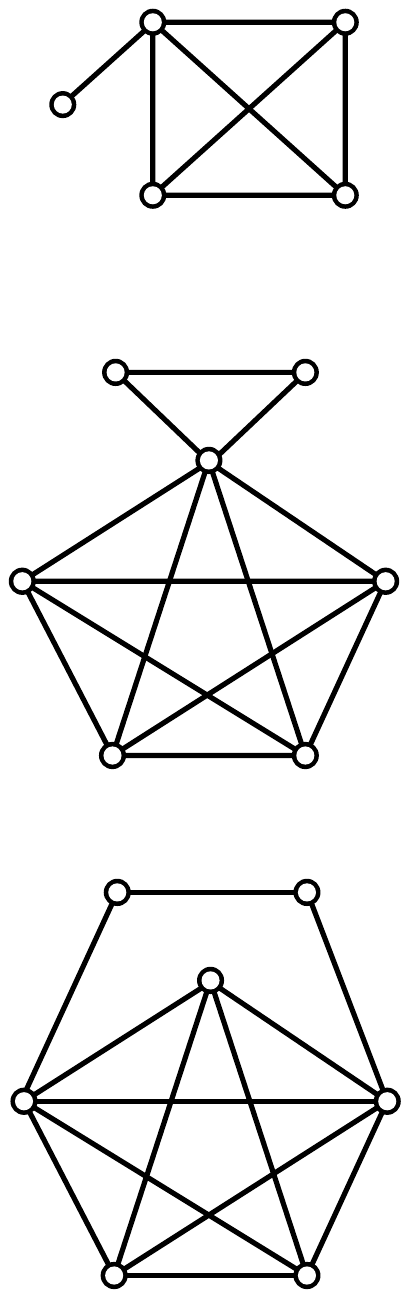}
    \caption{$G'_1$ is the 5-order extremal graph of $G_1$.}
    \label{figadd1}
    \end{figure}
\end{IEEEproof}

Theorem~\ref{th3} presents the exact Tur\'{a}n number for $G_1$ and the corresponding extremal graphs.

\begin{theorem}\label{th3}
    For all $n \geq 6$, $ex(n, G_1)=\lfloor\frac{n^2+n}{4}\rfloor.$
    Moreover, the $n$-order extremal graph $G$ depends on $n\pmod 4$:
\begin{enumerate}
    \item if $n = 4t$, then $G = tK_2 \vee \overline{K_{2t}}.$
    \item if $n = 4t + 1$, then $G = tK_2 \vee \overline{K_{2t+1}}$ or $G = (tK_2 + K_1) \vee \overline{K_{2t}};$
    \item if $n = 4t + 2$, then $G = (t+1)K_2 \vee \overline{K_{2t}}$ or $G = (tK_2 + K_1) \vee \overline{K_{2t+1}};$
    \item if $n = 4t + 3$, then $G = (t+1)K_2 \vee \overline{K_{2t+1}}.$
\end{enumerate}
\end{theorem}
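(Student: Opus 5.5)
We prove the equality $ex(n,G_1)=f(n)$, with $f(n)=\lfloor\frac{n^2+n}{4}\rfloor$, by combining an explicit construction for the lower bound with an induction on $n$ driven by Lemma~\ref{lem1} for the upper bound. The same induction, using part 3) of Lemma~\ref{lem1}, then pins down the extremal graphs.

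\emph{Lower bound.} First I would check that each graph listed in the statement is $G_1$-free. Each of them has the form $(sK_2+\epsilon K_1)\vee\overline{K_r}$ with $\epsilon\in\{0,1\}$, i.e.\ a matching, possibly plus one isolated vertex, joined completely to an independent set. The argument is structural. Any copy of $G_1$, a short cycle with chords as in Fig.~\ref{fig2}, would have to place its vertices alternately on the two sides of the join, with edges inside the matching side forming a matching. Using the specific chord pattern of $G_1$, I would show that this forces two chord-endpoints of degree $\ge 3$ to be adjacent inside the matching side in an impossible way, namely a vertex needing two neighbours inside the matching part. Next I would count edges. For $n=4t$ the count is $t+(2t)(2t)=4t^2+t=\frac{n^2+n}{4}$. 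The analogous computation in the other three residue classes gives $\lfloor\frac{n^2+n}{4}\rfloor$ in each case, and this is routine.

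\emph{Upper bound.} The base case is $ex(5,G_1)=7=f(5)$ from Lemma~\ref{lemadd}, and I would run the induction from there. Put
\[
F(n)=(n-1)\bigl(f(n+1)+1\bigr)-(n+1)f(n).
\]
Ignoring floors, this is
\[
\frac{(n+1)\bigl[(n-1)(n+2)-n(n+1)\bigr]}{4}+(n-1)=\frac{n-3}{2}.
\]
Keeping track of the floors in each residue class mod $4$ changes this by at most a bounded amount, so $F(n)>0$ for all $n\ge5$. For example, $F(5)=4\cdot 11-6\cdot 7=2$. Part 1) of Lemma~\ref{lem1} then gives $ex(n+1,G_1)\le f(n+1)$ whenever $ex(n,G_1)=f(n)$. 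Together with the lower bound, this yields $ex(n,G_1)=f(n)$ for all $n\ge 6$ by induction.

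\emph{Extremal graphs.} Since $F(n)\ge -1$, part 3) of Lemma~\ref{lem1} applies. It shows that every $(n+1)$-vertex extremal graph $G$ has $\delta(G)=f(n+1)-f(n)$, which is about $\frac{n+1}{2}$. Deleting a minimum-degree vertex $v_0$ leaves an $n$-vertex graph with exactly $f(n)$ edges, which is therefore extremal. By induction it is one of the listed graphs, with the $5$-vertex graphs of Lemma~\ref{lemadd} as the base; the non-join graph $G_1'$ must be eliminated or absorbed at the step $5\to6$. It remains to classify how $v_0$ can attach with $\delta(G)$ neighbours without creating $G_1$. I would show the neighbourhood of $v_0$ must be either the whole independent side, so that $v_0$ joins the matching side and pairs with an isolated vertex if one exists or else becomes the new isolated vertex, or the whole matching side, so that $v_0$ joins the independent side. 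Any mixed neighbourhood would create a copy of $G_1$ through a vertex of the matching with a matching partner. Reading off the four residue classes then gives exactly the listed graphs.

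The main obstacle is this last attachment analysis. It depends on the precise chord structure of $G_1$, and it must handle the base step from the three $5$-vertex extremal graphs, in particular showing that $G_1'$ does not extend. I would treat $n=6$ by direct case check and apply the general neighbourhood argument for $n\ge 6$.
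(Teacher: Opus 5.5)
\textbf{The gap: $F(n)>0$ fails for $n\equiv 0 \pmod 4$.} Your upper-bound induction breaks at every step $n=4t\to 4t+1$, because the claim that $F(n)>0$ for all $n\ge 5$ is false.

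The floor corrections are not a bounded perturbation. A rounding error of $\tfrac12$ in $f(n+1)$ or $f(n)$ gets multiplied by $n-1$ or $n+1$ inside $F(n)$. Computing exactly in each residue class gives $F(n)=\frac{n-1}{2},\ n-1,\ \frac{n-3}{2}$ for $n\equiv 1,2,3 \pmod 4$, but $F(n)=-1$ for $n\equiv 0\pmod 4$. For example, $F(8)=7\cdot 23-9\cdot 18=-1$.

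So for $n=4t$, first at $8\to 9$, Lemma~\ref{lem1} 1) does not apply. Counting does not exclude a $(4t+1)$-vertex graph with $f(4t+1)+1$ edges and minimum degree $2t+1$. Deleting such a vertex leaves exactly $f(4t)$ edges, so the value $ex(4t,G_1)$ alone gives no contradiction. Lemma~\ref{lem1} 2) only yields $ex(4t+1,G_1)\le f(4t+1)+1$. Your later appeal to Lemma~\ref{lem1} 3) at this step also presupposes $ex(4t+1,G_1)=f(4t+1)$, which is exactly what is missing.

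\textbf{The fix: a joint induction.} You cannot prove the value of $ex(n,G_1)$ first and classify extremal graphs afterwards; both must be carried through one induction, as the paper does. In the bad case, suppose $G$ is a $G_1$-free graph on $4t+1$ vertices with $f(4t+1)+1$ edges. The counting in the proof of Lemma~\ref{lem1} 2) gives a vertex $v_0$ of degree $f(4t+1)-f(4t)+1=2t+1$ with $G-v_0$ extremal. By the inductive \emph{uniqueness} statement at order $4t$, $G-v_0=tK_2\vee\overline{K_{2t}}$. Both sides of this join have exactly $2t$ vertices, so $N(v_0)$ must meet both sides. Your mixed-neighbourhood argument then produces a copy of $G_1$, giving $ex(4t+1,G_1)\le f(4t+1)$.

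So the upper bound at $4t\to 4t+1$ depends on uniqueness at order $4t$. For the first bad step, that is uniqueness of $2K_2\vee\overline{K_4}$ at $n=8$, obtained from your attachment analysis at $7\to 8$. With the induction restructured this way, the rest of your outline (Lemma~\ref{lem1} 3) plus the attachment analysis) is essentially the paper's argument.
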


\begin{IEEEproof}
We prove $ex(n,G_1)=\lfloor\frac{n^2+n}{4}\rfloor$ by induction and provide the precise characterization of extremal graphs.

For $n=6$, it is easy to prove $ex(6,G_1)=10$ and $(K_2+K_1)\vee \overline{K_3}$ , $2K_2\vee \overline{K_2}$ are $G_1$-free 6-order extremal graph.

Assuming the theorem holds for $n=k$ ($k\geq6$), we now consider the case $n=k+1$. Let $f(k)=\lfloor\frac{k^2+k}{4}\rfloor$ and $F(k)=(k-1)(f(k+1)+1)-(k+1)f(k)$. By direct calculation, we can obtain
$$
F(k)=
\begin{cases}
-1, & \text{if }k\equiv0\pmod4,\\[4pt]
\dfrac{k-1}{2}, & \text{if }k\equiv1\pmod4,\\[8pt]
k-1, & \text{if }k\equiv2\pmod4,\\[4pt]
\dfrac{k-3}{2}, & \text{if }k\equiv3\pmod4.
\end{cases}
$$

We consider two cases according to the value of $F(k)$.

\textbf{Case 1: } $k\equiv1,2,3\pmod4$.

We give the proof for $k\equiv1\pmod4$; the proofs for $k\equiv2,3\pmod4$ are similar to $k\equiv1\pmod4$.

For $k\equiv 1 \pmod4$, where $k=4t+1$, we have $F(k)>0$ and ${ex}(k,G_1)=f(k)$. By Lemma~\ref{lem1} 1), we obtain
${ex}(k+1,G_1)\leq f(k+1)$. On the other hand, $(t+1)K_2\vee \overline{K_{2t}}$ and $(tK_2+K_1)\vee \overline{K_{2t+1}}$ are $k+1$-vertex and $G_1$-free graphs with $f(k+1)$ edges. Hence ${ex}(k+1,G_1)\geq f(k+1)$. Therefore, we prove ${ex}(k+1,G_1)=f(k+1)$.

Suppose that $G$ is a $G_1$-free graph with $k+1$ vertices and $f(k+1)$ edges. Then there exists a vertex $v_0$ with minimum degree satisfying $d(v_0)=f(k+1)-f(k)$. By the induction hypothesis, $G-v_0=tK_2\vee \overline{K_{2t+1}}$ or $(tK_2+K_1)\vee \overline{K_{2t}}$. Under the condition of being $G_1$-free, the $(k+1)$-order extremal graph must be obtained from the $k$-order extremal graph $tK_2\vee \overline{K_{2t+1}}$ (or $(tK_2+K_1)\vee \overline{K_{2t}}$) by adding a vertex of degree $d_0=2t+1$. Consequently, $G=(t+1)K_2\vee \overline{K_{2t}}$ or $G=(tK_2+K_1)\vee \overline{K_{2t+1}}$.

\textbf{Case 2: } $k\equiv0\pmod4$.

For $k\equiv 4\pmod 4$, where $k=4t+4$, we have $F(k)=-1$ and ${ex}(k,G_1)=f(k)$. By Lemma~\ref{lem1} 2), we obtain ${ex}(k+1,G_1)\leq f(k+1)+1$. Consider a $G_1$-free graph $G$ with $k+1$ vertices and $f(k+1)+1$ edges. Then there exists a vertex $v_0$ with minimum degree satisfying $d(v_0)=f(k+1)-f(k)+1$. By the induction hypothesis, $G-v_0=(t+1)K_2\vee \overline{K_{2t+2}}$. In fact, the graph obtained from $(t+1)K_2\vee \overline{K_{2t+2}}$ by adding a vertex $v_0$ of degree $d(v_0)=2t+3$ cannot be $G_1$-free. This shows that $ex(k+1,G_1)\leq f(k+1)$. On the other hand, $(t+1)K_2\vee \overline{K_{2t+3}}$ and $((t+1)K_2+K_1)\vee \overline{K_{2t+2}}$ are $k+1$-vertex and $G_1$-free graphs with $f(k+1)$ edges. Hence ${ex}(k+1,G_1)\geq f(k+1)$. Therefore, we prove $ex(k+1,G_1)=f(k+1)$.

Suppose that $G$ is a $G_1$-free graph with $k+1$ vertices and $f(k+1)$ edges. Then there exists a vertex $v_0$ with minimum degree satisfying $d(v_0)=f(k+1)-f(k)$. By the induction hypothesis, $G-v_0=(t+1)K_2\vee \overline{K_{2t+2}}$. Under the condition of being $G_1$-free, the $(k+1)$-order extremal graph must be obtained from the $k$-order extremal graph $(t+1)K_2\vee \overline{K_{2t+2}}$ by adding a vertex of degree $d_0=2t+2$. Consequently, $G=(t+1)K_2\vee \overline{K_{2t+3}}$ or $G=((t+1)K_2+K_1)\vee \overline{K_{2t+2}}$.

By induction, the theorem is proved.
\end{IEEEproof}

Lemma~\ref{lem2} determines the exact values of $ex(6, G_2)$ and $ex(7,G_2)$, respectively, which are used to derive $ex(n, G_2)$.

\begin{restatable}{lemma}{GtwoLemma}\label{lem2}
The following statements are true:
\begin{enumerate}
    \item $ex(6, G_2) = 11$, and the $6$-order extremal graph is the graph $G_2'$ generated by adding an edge to $K_5$;
    \item $ex(7, G_2) = 13$, and the $7$-order extremal graph is $(K_2 + \overline{K}_{2} ) \vee \overline{K}_{3}$, $(K_1 + K_{2} ) \vee \overline{K}_{4}$, graph $G_2''$ generated by adding a path of length 3 to connect two non-adjacent vertices to $K_5$, or the graph $G_2'''$ which is composed of $K_3$ and $K_5$ with only one common vertex; 
    \item $ex(8, G_2) = 17$, and the $8$-order extremal graph is $(K_2 + \overline{K}_{2} ) \vee \overline{K}_{4}$. 
\end{enumerate}
\end{restatable}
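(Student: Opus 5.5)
The plan is to combine small-case enumeration for $n=6$ with the vertex-deletion argument of Lemma~\ref{lem1} for $n=7,8$, each time feeding in the previous value. Two structural facts about $G_2$, read off from Fig.~\ref{fig2}, are used throughout: $G_2$ is a short cycle with a chord on six vertices, so $K_5$ is $G_2$-free and $ex(5,G_2)=10$. Also, every vertex of $G_2$ has degree at least $2$, so a copy of $G_2$ can never use a vertex of degree $1$.

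\textbf{Part 1 ($n=6$).} For the lower bound, note that $G_2'$ is $K_5$ plus a pendant edge. A copy of $G_2$ in $G_2'$ would have to use all six vertices, including the pendant one, which is impossible by the degree fact. For the upper bound, a $6$-vertex graph with $12$ edges has a complement with $3$ edges. There are only five such complements up to isomorphism: $K_3$, $P_4$ (three edges), $K_{1,3}$, $P_3+K_2$ and $3K_2$. In each case I will exhibit an explicit copy of $G_2$, by finding a Hamiltonian $6$-cycle together with the required chord that avoid the missing edges. For uniqueness, I enumerate the complements with $4$ edges and show that every one except $K_{1,4}+K_1$ contains $G_2$; that complement gives exactly $G_2'$.

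\textbf{Part 2 ($n=7$).} For the lower bound, I check directly that each of the four listed graphs has $13$ edges and is $G_2$-free. For $G_2'''$ and $G_2''$ the point is that any $6$-cycle must cross the low-connectivity junction in a way that forbids the chord. For the upper bound, suppose $G$ has $7$ vertices and $14$ edges. If some vertex has degree at most $2$, deleting it leaves at least $12>11$ edges, which contains $G_2$ by Part 1. So $\delta(G)\ge 3$, and since the average degree is $4$, either $\delta=3$ or $G$ is $4$-regular.
\begin{itemize}
\item If $G$ is $4$-regular, the complement is $2$-regular, hence $C_7$ or $C_3+C_4$, and I exhibit $G_2$ in both.
\item If $\delta=3$, deleting a degree-$3$ vertex $v$ leaves $11$ edges, so $G-v=G_2'$ by the uniqueness in Part 1. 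Checking the few ways to attach $v$ to three vertices of $G_2'$ produces $G_2$ each time.
\end{itemize}

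To characterize the $13$-edge extremal graphs, the same deletion argument gives $\delta\in\{2,3\}$.
\begin{itemize}
\item If $\delta=2$, then $G-v=G_2'$, and attaching $v$ by two edges in a $G_2$-free way yields $G_2''$ or $G_2'''$.
\item If $\delta=3$, then $G-v$ is a $6$-vertex $G_2$-free graph with $10$ edges. Here I must enumerate such graphs and their $3$-edge attachments, which should lead to $(K_2+\overline{K}_2)\vee\overline{K}_3$ and $(K_1+K_2)\vee\overline{K}_4$.
\end{itemize}

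\textbf{Part 3 ($n=8$).} Suppose $G$ has $8$ vertices and $18$ edges. Then $\delta\le 4$, and deleting a minimum-degree vertex leaves at least $14>13$ edges, so $G$ contains $G_2$ by Part 2. The graph $(K_2+\overline{K}_2)\vee\overline{K}_4$ has $17$ edges, and I verify that it is $G_2$-free. For uniqueness, let $G$ be a $G_2$-free graph with $17$ edges. A vertex of degree at most $3$ would leave at least $14$ edges, so $\delta=4$ and $G-v$ is one of the four $7$-vertex extremal graphs from Part 2. Checking the $4$-edge attachments to each of them shows that only $(K_2+\overline{K}_2)\vee\overline{K}_4$ survives.

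I expect the main obstacle to be the $\delta=3$ branch of the $13$-edge characterization in Part 2. It requires listing all $G_2$-free graphs on $6$ vertices with $10$ edges, and this is where a missed case is most likely. I would organize that enumeration by complement (graphs with $5$ edges on $6$ vertices) and discard most cases quickly by finding a Hamiltonian $6$-cycle with the chord.
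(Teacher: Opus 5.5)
One correction first. $G_2$ cannot be a $6$-cycle with a single chord, as your wording (``the required chord'') suggests. Let $u\,x\,y\,w$ be the path added to the $K_5$ in $G_2''$, and let $z_1,z_2$ be two further $K_5$ vertices. Then $u\,x\,y\,w\,z_1\,z_2$ is a $6$-cycle in $G_2''$ carrying the short chord $uz_1$ and the long chord $uw$. Also, $(K_2+\overline{K_2})\vee\overline{K_4}$ contains $K_{3,3}$ and a $6$-cycle with a short chord. So with one chord, Parts 2 and 3 would be false.

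Among $6$-cycles with chords, the statement forces $G_2$ to be the cycle $0\,1\,2\,3\,4\,5$ with chords $02$ and $24$: two triangles meeting at $2$, whose far corners $0,4$ share the neighbour $5$. Your explicit copies must use this graph, and your reason for $G_2''$ (``the junction forbids the chord'') must be replaced. Three facts handle every freeness check:
\begin{enumerate}
\item The triangles $012$ and $234$ are edge-disjoint, so $G_2$ is not bipartite plus one edge. This settles $(K_2+\overline{K_2})\vee\overline{K_3}$, $(K_1+K_2)\vee\overline{K_4}$ and $(K_2+\overline{K_2})\vee\overline{K_4}$ at once.
\item $G_2$ is $2$-connected of order $6$, while the blocks of $G_2'''$ have orders $5$ and $3$.
\item The degree-$2$ vertices $1,3,5$ of $G_2$ have only neighbours of degree at least $3$. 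A copy in $G_2''$ omits one vertex. Omitting $u,w,x$ or $y$ leaves a vertex of degree $1$. Otherwise the $G_2$-vertex placed on $x$ has degree $2$, so its neighbour placed on $y$ needs degree at least $3$, but $y$ has degree $2$.
\end{enumerate}

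With that fixed, your plan is correct and its skeleton is the paper's: complement enumeration at $n=6$, then minimum-degree deletion feeding each characterization into the next order. Part~3 is essentially identical. The differences are all at $n=7$.
\begin{enumerate}
\item For $14$ edges, the paper dismisses $\delta=3$ faster: if $G-v=G_2'$, its pendant vertex has degree at most $2$ in $G$.
\item For the $4$-regular case, the paper compares neighbourhoods of two non-adjacent vertices. Your reduction to the complements $C_7$ and $C_3+C_4$ is cleaner.
\item The real difference is the $\delta=3$ branch of the $13$-edge characterization. The paper splits by degree sequence but treats only $(3,3,4,4,4,4,4)$ and $(3,3,3,3,4,5,5)$. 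It omits $(3,3,3,3,3,5,6)$, $(3,3,3,3,4,4,6)$ and $(3,3,3,4,4,4,5)$. Your reduction to ``$G-v$ is a $G_2$-free graph with $6$ vertices and $10$ edges'' is exhaustive by construction.
\end{enumerate}

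The enumeration you fear is short. $\overline{G_2}$ is the $4$-cycle $1\,4\,0\,3$, with the triangle $1\,3\,5$ on the edge $13$ and the pendant edge $52$. The $5$-edge graphs that do not embed in it are:
\begin{enumerate}
\item the diamond;
\item a triangle with two pendant edges at one vertex;
\item two adjacent degree-$3$ vertices each carrying two leaves;
\item $K_{1,4}$ with one edge subdivided;
\item $K_{1,5}$;
\item a triangle plus a disjoint two-edge path.
\end{enumerate}
Only three of their complements have minimum degree at least $2$, as $G-v$ must. These are $K_2\vee(K_2+\overline{K_2})$, $K_4$ with triangles hung on two disjoint edges, and $K_{3,3}$ plus an edge. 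In the first two, $v$ must be adjacent to both degree-$2$ vertices, and every choice of its third neighbour creates $G_2$. In the third, $v$ avoids $G_2$ exactly when it is joined to one whole side. That yields precisely $(K_2+\overline{K_2})\vee\overline{K_3}$ and $(K_1+K_2)\vee\overline{K_4}$.
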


The proof of Lemma~\ref{lem2} is relegated to Appendix A.

\begin{theorem}\label{th4}
 For all $n \geq 8$, $ex(n, G_2)=\lfloor\frac{n^2}{4}\rfloor+1.$ Moreover, the $n$-order extremal graph $G$ depends on $n\pmod 2$:
\begin{enumerate}
    \item if $n = 2t$, then $G = (K_2+\overline{K_{t-2}}) \vee \overline{K_t}$;
    \item if $n = 2t + 1$, then $G = (K_2+\overline{K_{t-1}}) \vee \overline{K_t}$ or $G = (K_2+\overline{K_{t-2}}) \vee \overline{K_{t+1}}$.
\end{enumerate}
\end{theorem}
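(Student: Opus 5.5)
We argue by induction on $n$, in the same way as Theorem~\ref{th3}, taking Lemma~\ref{lem2} 3) as the base case $n=8$: $ex(8,G_2)=17=\lfloor 64/4\rfloor+1$, with unique extremal graph $(K_2+\overline{K_2})\vee\overline{K_4}$, which is the case $t=4$ of item 1). Set $f(k)=\lfloor k^2/4\rfloor+1$ and
\[
F(k)=(k-1)\bigl(f(k+1)+1\bigr)-(k+1)f(k).
\]

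\textbf{Computing $F(k)$.} For $k=2t$ we have $f(k)=t^2+1$ and $f(k+1)=t^2+t+1$. Then
\[
F(k)=(2t-1)(t^2+t+2)-(2t+1)(t^2+1)=t^2-t-3 .
\]
For $k=2t+1$ we have $f(k+1)=(t+1)^2+1$. Then
\[
F(k)=2t\bigl((t+1)^2+2\bigr)-(2t+2)(t^2+t+1)=t^2+t-2 .
\]
Both expressions are positive for $k\ge 8$, so Lemma~\ref{lem1} 1) gives $ex(k+1,G_2)\le f(k+1)$.

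\textbf{Lower bound.} Before the induction we check that the claimed graphs are $G_2$-free. Each is $K_2+\overline{K_s}$ joined to an independent set, i.e.\ a complete bipartite graph $K_{s+2,r}$ with one edge $e$ added inside a part. Any subgraph of it is bipartite after deleting $e$. Assuming $G_2$ (from Fig.~\ref{fig2}) is a short cycle carrying chords that contain at least two odd cycles with no single common edge, $G_2$ cannot embed. This check must be matched carefully against the actual drawing of $G_2$. Given it, the edge counts $\lfloor n^2/4\rfloor+1$ follow directly:
\begin{itemize}
\item for $n=2t$: $t\cdot t+1$;
\item for $n=2t+1$: $(t+1)t+1$ in both forms.
\end{itemize}
Hence equality holds.

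\textbf{Characterization.} By Lemma~\ref{lem1} 3), every extremal $(k+1)$-vertex graph $G$ has a vertex $v_0$ of degree $\delta=f(k+1)-f(k)$. This equals $t$ when $k=2t$, and $t+1$ when $k=2t+1$. Since $G-v_0$ has $f(k)$ edges, it is extremal, so by induction it is one of the listed graphs. It remains to decide how $v_0$ can attach.

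\textbf{Case $k=2t$.} Here $G-v_0=(K_2+\overline{K_{t-2}})\vee\overline{K_t}$ and $v_0$ has $t$ neighbours. Attaching $v_0$ to the whole $\overline{K_t}$ side gives $(K_2+\overline{K_{t-1}})\vee\overline{K_t}$. Attaching it to the whole $K_2+\overline{K_{t-2}}$ side gives $(K_2+\overline{K_{t-2}})\vee\overline{K_{t+1}}$.

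\textbf{Case $k=2t+1$.} Starting from either extremal graph on $k$ vertices, a degree-$(t+1)$ attachment has to produce $(K_2+\overline{K_{t-1}})\vee\overline{K_{t+1}}$.

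\textbf{Main obstacle.} The hard step is excluding mixed attachments, where $v_0$ has neighbours in both parts, or is adjacent to an endpoint of the extra edge together with vertices on the same side. Any such attachment creates a second odd cycle through $v_0$. Using $t\ge4$ gives many disjoint common neighbours, and from these one builds the cycle plus chords of $G_2$ explicitly. I would first carry this out for a neighbour set meeting both sides, then for a neighbour set that contains one endpoint of the $K_2$. This is where the detailed structure of $G_2$ enters, and it is done the same way as the corresponding step in Theorem~\ref{th3}.
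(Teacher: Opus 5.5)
Your proposal follows essentially the same route as the paper's proof: induction from Lemma~\ref{lem2}~3), positivity of $F(k)$ to invoke Lemma~\ref{lem1}~1), the complete-bipartite-plus-one-edge construction for the matching lower bound, and removal of a minimum-degree vertex followed by re-attachment to identify the extremal graphs (the paper only asserts this last step as readily determined, so your sketch of excluding mixed attachments is at least as detailed as the paper). One correction: your closed forms for $F(k)$ are miscomputed---in fact $F(2t)=t-3$ and $F(2t+1)=2t-2$, i.e.\ $\frac{k}{2}-3$ and $k-3$ as in the paper---but both are still positive for all $k\geq 8$ (and $F(2t)>0$ first holds at $t=4$, which is exactly why the induction starts at $n=8$), so the argument is unaffected.
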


\begin{IEEEproof}
We prove $ex(n, G_2) = \lfloor \frac{n^2}{4} \rfloor + 1$ by induction. 

For the base case $n = 8$, it follows from the Lemma~\ref{lem2} that $ex(8, G_2) = 17$.

Assuming the theorem holds for $n = k$ ($k \geq 8$), we now consider the case $n=k+1$. Let $f(k) = \lfloor \frac{k^2}{4} \rfloor + 1$ and $F(k) = (k - 1)(f(k + 1) + 1) - (k + 1)f(k)$. By direct calculation, we obtain
$$
F(k) = 
\begin{cases}
\displaystyle \frac{k}{2} - 3, & \text{if } k \text{ is even}, \\
\displaystyle k - 3, & \text{if } k \text{ is odd}.
\end{cases}
$$
For $k \geq 8$, we have $F(k) > 0$ and $ex(k, G_2) = f(k)$. By Lemma~\ref{lem1} 1), we obtain $ex(k + 1, G_2) \leq f(k + 1)$. On the other hand, $( K_2 + \overline{K}_{\lceil \frac{k + 1}{2} \rceil - 2} ) \vee \overline{K}_{\lfloor \frac{k + 1}{2} \rfloor}$ is a $G_2$-free graph with $k + 1$ vertices and $f(k + 1)$ edges. Hence, $ex(k + 1, G_2) \geq f(k + 1)$. Therefore, we prove $ex(k + 1, G_2) = f(k + 1)$. Similar to Theorem~\ref{th2}, all cases of $k+1$-order extremal graphs for $G_2$ can be readily determined. By induction, the theorem is proved.
\end{IEEEproof}

The following lemmas serve as the foundation for deriving the Tur\'{a}n number of $G_3$ and characterizing its extremal graphs. We first state Lemma~\ref{lem3} without proof.

\begin{lemma}\label{lem3}
    Let $G$ be a simple 3-regular graph of order $6$. Then $G=K_{3,3}$ or $G=H_{3}$. 
\end{lemma}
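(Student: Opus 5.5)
The plan is to pass to the complement, which turns the statement into a classification of $2$-regular graphs, and those are easy to list. Let $G$ be a simple $3$-regular graph on the $6$ vertices $V$. Then in $\overline{G}$ every vertex has degree $5-3=2$, so $\overline{G}$ is a simple $2$-regular graph of order $6$. Since $G$ is determined up to isomorphism by $\overline{G}$, it suffices to classify $\overline{G}$.

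First, I will recall that a finite simple $2$-regular graph is a disjoint union of cycles. Each component is connected with all degrees $2$, hence is a cycle $C_\ell$, and simplicity forces $\ell\geq 3$. The cycle lengths must sum to $6$ with every part at least $3$. The only such partitions are $6$ and $3+3$. Hence $\overline{G}=C_6$ or $\overline{G}=2C_3$.

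Next, I will compute the two complements.
\begin{enumerate}
    \item If $\overline{G}=2C_3$ on the parts $\{x_1,x_2,x_3\}$ and $\{y_1,y_2,y_3\}$, then $G$ has no edge inside either part and every edge $x_iy_j$ between them. So $G=K_{3,3}$.
    \item If $\overline{G}=C_6$ with vertices $v_1v_2\cdots v_6$ in cyclic order, then $G$ consists of the three ``long diagonals'' $v_1v_4$, $v_2v_5$, $v_3v_6$ together with the edges at cyclic distance $2$. The distance-$2$ edges form the two triangles $v_1v_3v_5$ and $v_2v_4v_6$, and the diagonals give a perfect matching between them. Thus $G$ is the triangular prism $C_3\times K_2$, which I take to be the graph $H_3$ of the paper's figure.
\end{enumerate}

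The only step that needs any care is matching the second complement to the paper's $H_3$. Here I would check explicitly that $H_3$ is the prism: two disjoint triangles joined by a perfect matching, equivalently $\overline{C_6}$. The rest is routine. Finally, I will note that $K_{3,3}$ is triangle-free while the prism is not, so the two cases are non-isomorphic and both genuinely occur. This completes the classification.
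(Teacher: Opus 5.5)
Your proof is correct and complete. The paper states this lemma explicitly without proof, so there is no argument of the paper's to compare against. Your route is the standard and shortest one. Complementation turns the question into classifying 2-regular graphs on six vertices. Since cycles must have length at least $3$, these are only $C_6$ and $2C_3$. Their complements are $K_{3,3}$ and the prism.

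Your identification of $\overline{C_6}$ with the paper's $H_3$ is right. The paper's figure names $H_3$ as the triangular prism. Your computation gives the triangles $v_1v_3v_5$ and $v_2v_4v_6$ joined by the matching $v_1v_4$, $v_3v_6$, $v_5v_2$. Every bijection between two triangles is induced by a triangle automorphism, so any perfect matching between two disjoint triangles gives the same graph up to isomorphism.

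Your closing remark that $K_{3,3}$ and the prism are non-isomorphic is not needed for the dichotomy the lemma asserts, but it does no harm. It is also what gives content to the later use in the proof of Lemma~\ref{lem4}, where $H_3$ is ruled out separately by $G_3$-freeness.
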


\begin{figure}[!h]
\centering
\includegraphics[width=2in]{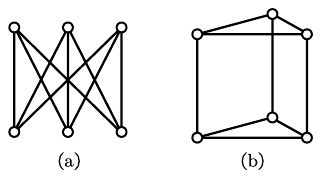}
\caption{The complete bipartite graph $K_{3,3}$ and the triangular prism graph $H_3$.}
\end{figure}

\begin{restatable}{lemma}{GfourLemma}\label{lem4}
The following statements are true:
    \begin{enumerate}
        \item $ex(6, G_3) = 12$, and the $6$-order extremal graph is $G = K_3 \vee \overline{K_3}$;
        \item $ex(7, G_3) = 15$, and the $7$-order extremal graph is $G = (K_3 + K_1) \vee \overline{K_3}$, $G = K_3 \vee \overline{K_4}$, or $G = S_4 \vee \overline{K_3}$;
        \item $ex(8, G_3) = 19$, and the $7$-order extremal graph is $G = (K_3 + K_1) \vee \overline{K_4}$ or $G = S_4 \vee \overline{K_4}$.
    \end{enumerate}
\end{restatable}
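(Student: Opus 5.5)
Since $G_3$ is specified only through Fig.~2, the plan is organized so that the structure of $G_3$ enters only through finitely many explicit checks "does this small graph contain $G_3$?". Each such check is a direct inspection.

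\emph{Statement 1).} For the upper bound, let $G$ have $6$ vertices and $13$ edges, so $\overline{G}$ has exactly $2$ edges. Then $\overline{G}$ is $2K_2$ or $P_2$ (a path of length $2$), and in each of these two cases I will exhibit a copy of $G_3$ inside $G$. For the lower bound, I will verify directly that $K_3\vee\overline{K_3}$, whose complement is $K_3+\overline{K_3}$, is $G_3$-free. For uniqueness, I will list the complements with exactly $3$ edges:
\[
K_3,\quad P_3,\quad P_2+K_2,\quad 3K_2,\quad S_4 .
\]
I will then check that every case other than $K_3$ yields a graph containing $G_3$.

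\emph{Statement 2).} Here Lemma~\ref{lem1} does not apply, since $F(6)=5\cdot16-7\cdot12=-4$, so I will run its vertex-deletion argument by hand. Take $G$ with $7$ vertices and $16$ edges. If some vertex has degree at most $3$, deleting it leaves at least $13$ edges on $6$ vertices, so $G_3\subseteq G$ by 1). Otherwise $\delta(G)\ge4$. Deleting a vertex of degree $4$ then leaves exactly $12$ edges on $6$ vertices; if that graph is $G_3$-free, it is $K_3\vee\overline{K_3}$ by the uniqueness in 1). It then remains to check that every way of attaching a new vertex of degree $4$ to $K_3\vee\overline{K_3}$ creates $G_3$. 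This gives $ex(7,G_3)\le15$, and the three listed graphs witness equality once they are checked to be $G_3$-free.

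For the characterization, let $G$ be extremal with $15$ edges. Then $\delta(G)\le4$, because $7\cdot5>30$.
\begin{itemize}
\item If $\delta(G)\le3$, deleting a minimum-degree vertex leaves at least $12$ edges on $6$ vertices, which forces the remainder to be $K_3\vee\overline{K_3}$ and the deleted vertex to have degree exactly $3$. I will enumerate its three neighbours up to symmetry, which should produce $(K_3+K_1)\vee\overline{K_3}$, $K_3\vee\overline{K_4}$ and $S_4\vee\overline{K_3}$.
\item If $\delta(G)=4$, the degree sequence is $(4,4,4,4,4,5,5)$ or $(4,4,4,4,4,4,6)$. Then $\overline{G}$ has $6$ edges and maximum degree at most $2$, so $\overline{G}$ is a union of paths and cycles. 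I will enumerate these few configurations and show each makes $G$ contain $G_3$, or else is one of the listed graphs. Lemma~\ref{lem3} should enter at this point. Whenever some $6$-vertex subgraph is $3$-regular, it must be $K_{3,3}$ or $H_3$, and these two cases are handled directly.
\end{itemize}

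\emph{Statement 3).} This case is cleaner. Take $G$ with $8$ vertices and $20$ edges. If a vertex has degree at most $4$, deleting it leaves at least $16$ edges on $7$ vertices, so $G_3\subseteq G$ by 2). Otherwise $\delta(G)\ge5$, and since $8\cdot5=40=2\cdot20$, the graph $G$ is $5$-regular. Then $\overline{G}$ is $2$-regular on $8$ vertices, hence $C_8$, $C_5+C_3$, or $2C_4$, and I will show each such $G$ contains $G_3$. Both listed graphs have $19$ edges and will be checked to be $G_3$-free.

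For an extremal $G$ with $19$ edges, $5$-regularity is impossible because $38<40$. So some vertex $v$ has degree at most $4$, and $G-v$ has at least $15$ edges; hence $d(v)=4$ and $G-v$ is one of the three graphs from 2). Enumerating the degree-$4$ attachments that avoid $G_3$ should leave exactly $(K_3+K_1)\vee\overline{K_4}$ and $S_4\vee\overline{K_4}$.

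The main obstacle is the $\delta(G)=4$ subcase of 2), where deletion does not land on an extremal graph. I expect the complement-enumeration, aided by Lemma~\ref{lem3}, to be the longest part of the proof.
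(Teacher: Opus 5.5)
Your plan follows the paper's proof essentially step for step (complement check for $n=6$, vertex deletion plus degree counting for $n=7,8$), differing only cosmetically: in the $(4,4,4,4,4,5,5)$ subcase you enumerate $\overline{G}$ as a union of paths and cycles, whereas the paper compares the neighbourhoods of two non-adjacent degree-$4$ vertices; and for a $5$-regular $8$-vertex graph the paper avoids your $C_8$, $C_5+C_3$, $2C_4$ checks by noting that $G-v$ would be a $15$-edge $G_3$-free graph with degree sequence $(4,4,4,4,4,5,5)$, which matches none of the three extremal graphs from 2). One correction: $S_4\vee\overline{K_3}=K_1\vee K_{3,3}$ has minimum degree $4$, so it cannot come out of your $\delta(G)\le 3$ enumeration (which yields only $(K_3+K_1)\vee\overline{K_3}$ and $K_3\vee\overline{K_4}$); it arises in the $(4,4,4,4,4,4,6)$ subcase, where $\overline{G}$ is $2C_3$ plus an isolated vertex.
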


The proof of Lemma~\ref{lem4} is relegated to Appendix A.

\begin{theorem}\label{th5}
For all $n \geq 9$, $ex(n, G_3) = \lfloor \frac{n^2}{4} \rfloor + \lceil \frac{n}{2} \rceil - 1.$
Moreover, the $n$-order extremal graph $G$ depends on $n\pmod 2$:
\begin{enumerate}
    \item if $n = 2t$, then $G = S_t \vee \overline{K_t}$;
    \item if $n = 2t + 1$, then $G = S_{t+1} \vee \overline{K_t}$.
\end{enumerate} 
\end{theorem}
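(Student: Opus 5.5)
We argue by induction on $n$, exactly as in Theorems~\ref{th3} and~\ref{th4}, with Lemma~\ref{lem4} as the base and Lemma~\ref{lem1} as the inductive engine. Set $f(n)=\lfloor \frac{n^2}{4}\rfloor+\lceil \frac{n}{2}\rceil-1$.

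\textbf{Base case $n=9$.} Lemma~\ref{lem4} 3) gives $ex(8,G_3)=19=f(8)$. We apply Lemma~\ref{lem1} with $n=8$. Since $f(9)=20+5-1=24$, we get $F(8)=7\cdot 25-9\cdot 19=4>0$, so $ex(9,G_3)\le 24$. For the matching lower bound, note that $S_5\vee\overline{K_4}$ has $4+20=24$ edges. We will check directly that it is $G_3$-free; the same check for general $t$ is needed below.

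\textbf{Inductive step: the value of $f(k+1)$.} Assume the statement holds for $n=k\ge 9$. Computing $F(k)=(k-1)(f(k+1)+1)-(k+1)f(k)$ in the two parities:
\begin{itemize}
\item for $k=2t$, $f(k)=t^2+t-1$ and $f(k+1)=t^2+2t$, so $F(k)=t-2>0$;
\item for $k=2t+1$, $f(k)=t^2+2t$ and $f(k+1)=t^2+3t$, so $F(k)=t^2-t-1+\dots$; we only need to confirm it is positive, which holds for $t\ge 4$.
\end{itemize}
Lemma~\ref{lem1} 1) then gives $ex(k+1,G_3)\le f(k+1)$. The lower bound comes from the claimed extremal graphs $S_{t}\vee\overline{K_t}$ and $S_{t+1}\vee\overline{K_t}$. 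Their edge counts are $(t-1)+t^2=f(2t)$ and $t+t(t+1)=f(2t+1)$, respectively. Their $G_3$-freeness should follow from a structural feature: any cycle uses the star side only through its centre or alternates between the two sides. Which feature is decisive depends on the drawing of $G_3$ in Fig.~\ref{fig2}; I would pin it down via the degree and chord structure of $G_3$, for instance a vertex count or a required triangle configuration that the near-bipartite structure cannot host.

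\textbf{Inductive step: extremal graphs.} Lemma~\ref{lem1} 3) (valid since $F(k)\ge -1$) shows that any extremal $(k+1)$-vertex graph $G$ has a vertex $v_0$ of degree $f(k+1)-f(k)$. This degree is $t+1$ when $k=2t$ and $t$ when $k=2t+1$. Moreover $G-v_0$ has exactly $f(k)$ edges, so it is extremal and, by induction, equals $S_t\vee\overline{K_t}$ or $S_{t+1}\vee\overline{K_t}$. It remains to decide which neighbourhoods $N(v_0)$ of this size keep $G$ free of $G_3$. We will show $v_0$ must be joined either to all of one side (becoming a new leaf of the star) or to the whole $\overline{K}$ side plus the star centre. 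Adding any other edge, such as one to two star leaves, or mixing the sides, creates $G_3$ together with the dense $K_{t,t}$-type part. Both admissible options give the claimed graph up to relabelling.

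\textbf{Main obstacle.} The hardest part is this final case analysis: ruling out every other attachment of $v_0$. The $G_3$-freeness check is closely related. I would handle both by exhibiting explicit copies of $G_3$ using only a bounded number of vertices, which is available since $t\ge 4$ gives ample room in the $K_{t,t}$ part.
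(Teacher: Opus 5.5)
\textbf{There is a genuine gap: your value of $F(k)$ for odd $k$ is wrong.} Your plan is the paper's: induction via Lemma~\ref{lem1}, then classifying how a minimum-degree vertex $v_0$ can attach to the smaller extremal graph. The error in $F$ falls exactly where the paper needs an extra argument. Since $f(2t+1)=t^2+2t$ and $f(2t+2)=(t+1)^2+(t+1)-1=t^2+3t+1$ (not $t^2+3t$), one gets $F(2t)=t$ (not $t-2$, which is harmless), but
\[
F(2t+1)=2t\,(t^2+3t+2)-(2t+2)(t^2+2t)=0 .
\]
So for odd $k$, Lemma~\ref{lem1} 1) does not apply. Lemma~\ref{lem1} 2) only gives $ex(2t+2,G_3)\le f(2t+2)+1$, so your induction has no matching upper bound at any even order. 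The missing step is the paper's Case 2:
\begin{itemize}
\item A $G_3$-free graph on $2t+2$ vertices with $f(2t+2)+1$ edges has minimum degree exactly $t+2$.
\item Deleting a minimum-degree vertex $v_0$ leaves an extremal graph, i.e.\ $S_{t+1}\vee\overline{K_t}$ by induction.
\item Since $t+2$ exceeds both part sizes $t+1$ and $t$, $v_0$ has a neighbour $b$ in $\overline{K_t}$ and a neighbour $a$ in $S_{t+1}$ other than the centre.
\item So $v_0ab$ is a triangle avoiding the centre, whereas every triangle of $S_{t+1}\vee\overline{K_t}$ passes through it. From this one must exhibit a copy of $G_3$.
\end{itemize}

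\textbf{Your classification of extremal graphs also fails as written.} With the corrected values, $d(v_0)=f(2t+2)-f(2t+1)=t+1$ (not $t$) in the step from $k=2t+1$. Joining $v_0$ to the star centre and all of $\overline{K_t}$ is an attachment of exactly this degree. It produces $S_{t+2}\vee\overline{K_t}$, with $(t+1)+t(t+2)=f(2t+2)$ edges, which is not the claimed graph. It is $G_3$-free whenever $S_{t+1}\vee\overline{K_{t+1}}$ is. Indeed, $G_3$ has at most six vertices (Lemma~\ref{lem4} 1)) and contains a cycle. So for $t\ge 4$, any copy of $G_3$ in $S_{t+2}\vee\overline{K_t}$ uses at most $t+1$ star-side vertices, and can be mapped into $S_{t+1}\vee\overline{K_{t+1}}$ centre to centre and side to side. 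Hence for every even $n=2s\ge 10$, both $S_s\vee\overline{K_s}$ and $S_{s+1}\vee\overline{K_{s-1}}$ are extremal. The uniqueness in part 1) therefore cannot be proved as stated, and the paper's Case 2 overlooks this attachment too. In a corrected induction, the odd-order step must start from both even-order extremal graphs.

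Finally, your base case yields only $ex(9,G_3)=24$. The induction also needs $S_5\vee\overline{K_4}$ to be the unique $9$-vertex extremal graph, which you do not establish; the paper merely asserts it.
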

\begin{IEEEproof}
We prove $ex(n, G_3) = \lfloor \frac{n^2}{4} \rfloor + \lceil \frac{n}{2} \rceil - 1$ by induction and provide the precise characterization of extremal graphs.

For the base case $n=9$, it is easy to prove $ex(9,G_3)=24$ and $S_5\vee \overline{K_4}$is the only $G_3$-free 9-order extremal graph. 

Assuming the theorem holds for $n = k$ ($k \geq 9$), we now consider the case $n=k+1$. Let $f(k) = \lfloor \frac{k^2}{4} \rfloor + \lceil \frac{k}{2} \rceil - 1$ and $F(k) = (k - 1)(f(k + 1) + 1) - (k + 1)f(k)$. By direct calculation, we can obtain
$$
F(k) = 
\begin{cases}
\displaystyle \frac{k}{2}, & \text{if } k \text{ is even}, \\
\displaystyle 0, & \text{if } k \text{ is odd}.
\end{cases}
$$

We consider two cases according to the value of $F(k)$.

\textbf{Case 1:} $k$ is even.  

For $k$ is even, where $k = 2t$, we have $F(k) > 0$ and $ex(k, G_3) = f(k)$. By Lemma~\ref{lem1} 1), we obtain $ex(k + 1, G_3) \leq f(k + 1)$. On the other hand, $S_{t+1} \vee \overline{K_t}$ is a $G_3$-free graph with $k + 1$ vertices and $f(k + 1)$ edges, so $ex(k + 1, G_3) \geq f(k + 1)$. Thus we have proved $ex(k + 1, G_3) = f(k + 1)$.  

Suppose that $G$ is a $G_3$-free graph with $k + 1$ vertices and $f(k + 1)$ edges. Then there exists a vertex $v_0$ with minimum degree satisfying $d(v_0) = f(k + 1) - f(k)$. By the induction hypothesis, $G - v_0 = S_t \vee \overline{K_t}$. Under the condition of being $G_3$-free, the $k+1$-order extremal graph must be obtained from the $k$-order extremal graph $S_t \vee \overline{K_t}$ by adding a vertex of degree $d_0 = t + 1$. Consequently, $G = S_{t+1} \vee \overline{K_t}$.

\textbf{Case 2:} $k$ is odd. 

For $k$ is odd, where $k = 2t + 1$, we have $F(k) = 0$ and $ex(k, G_3) = f(k)$. By Lemma~\ref{lem1} 2), we obtain $ex(k + 1, G_3) \leq f(k + 1) + 1$. Consider a $G_3$-free graph $G$ with $k + 1$ vertices and $f(k + 1) + 1$ edges. Then there exists a vertex $v_0$ with minimum degree satisfying $d(v_0) = f(k + 1) - f(k) + 1$. By the induction hypothesis, $G - v_0 = S_{t+1} \vee \overline{K_t}$. In fact, the graph obtained from $S_{t+1} \vee \overline{K_t}$ by adding a vertex $v_0$ of degree $d(v_0) = t + 2$ cannot be $G_3$-free. This shows that $ex(k + 1, G_3) \leq f(k + 1)$. On the other hand, $S_{t+1} \vee \overline{K_{t+1}}$ is a $G_3$-free graph with $k + 1$ vertices and $f(k + 1)$ edges, so $ex(k + 1, G_3) \geq f(k + 1)$. Thus we have proved $ex(k + 1, G_3) = f(k + 1)$.  

Suppose that $G$ is a $G_3$-free graph with $k + 1$ vertices and $f(k + 1)$ edges. Then there exists a vertex $v_0$ with minimum degree satisfying $d(v_0) = f(k + 1) - f(k)$. By the induction hypothesis, $G - v_0 = S_{t+1} \vee \overline{K_t}$. Under the condition of being $G_3$-free, the $k+1$-order extremal graph of order $k + 1$ for $G_3$ must be obtained from the extremal graph of order $k$, $S_{t+1} \vee \overline{K_t}$, by adding a vertex of degree $d_0 = t + 1$. Therefore, $G = S_{t+1} \vee \overline{K_{t+1}}$.

In summary, the theorem is proved by induction.
\end{IEEEproof}

\section{Theoretical bounds for ETSs}
In this section, we study ETSs in Tanner graph by Tur\'{a}n numbers and establish a connection between specific structures in VN graphs and ETSs in Tanner graph. We also provide theoretical bounds for elementary trapping sets.

When the girth is 8 and if any two 8-cycles in Tanner graph do not share a common variable node, then the VN graph of any (a, b)-ETSs is $\{\theta(2,2,2), \theta(1, 3, 3), D(4,4;0)\}$-free. This leads to the following theorem:

\begin{figure}[!t]
\centering
\includegraphics[width=2.9in]{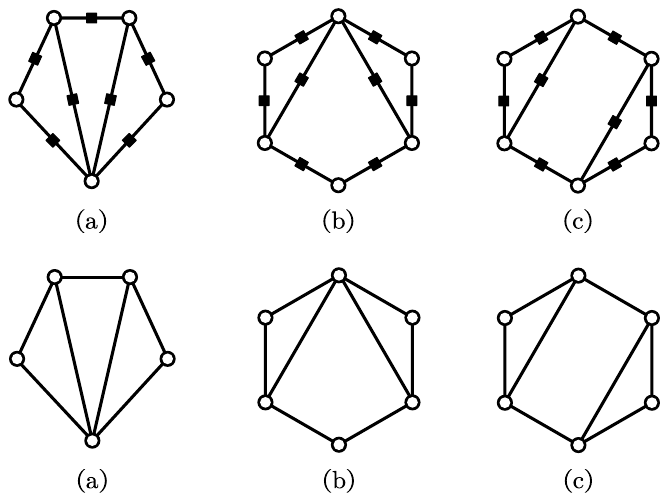}
\caption{Figures (a), (b) and (c) depict three substructures $T_1$, $T_2$ and $T_3$ of the Tanner graph and the corresponding VN graphs are $G_1$, $G_2$ and $G_3$,  respectively.}
\label{fig4}
\end{figure}

\begin{theorem}\label{th6}
    For an $(a,b)$-ETS in a Tanner graph with girth 8 and variable-regular degree $d_L(v) = \gamma$, if any two 8-cycles share no common variable node, then $b\geq a\gamma-\frac{a(\sqrt{24a-23}-1)}{4}$.
\end{theorem}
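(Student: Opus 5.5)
The plan is to pass from the $(a,b)$-ETS to its VN graph $G_{\mathrm{VN}}$ and then apply Theorem~\ref{th2} to it. Since the Tanner graph has girth $8>4$, $G_{\mathrm{VN}}$ is a simple graph on $a$ vertices. Because the code is variable-regular of degree $\gamma$, it has exactly $|E_{\mathrm{VN}}| = \frac{1}{2}(a\gamma-b)$ edges. Thus it suffices to show that $G_{\mathrm{VN}}$ is $\mathcal{H}$-free, where $\mathcal{H}=\{C_3,\theta(2,2,2),\theta(1,3,3),D(4,4;0)\}$. Theorem~\ref{th2} then gives $\frac{1}{2}(a\gamma-b)\le \frac{a(\sqrt{24a-23}-1)}{8}$, which rearranges directly to $b\ge a\gamma-\frac{a(\sqrt{24a-23}-1)}{4}$.

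The first step is the dictionary between VN-graph cycles and Tanner-graph cycles. Each edge of $G_{\mathrm{VN}}$ is a degree-2 check node in the ETS, and distinct edges correspond to distinct check nodes, since each such check node has exactly two variable neighbours in $S$. Consequently a $k$-cycle in $G_{\mathrm{VN}}$ lifts to a $2k$-cycle in the Tanner graph. A $C_3$ in $G_{\mathrm{VN}}$ would therefore give a 6-cycle, contradicting girth 8, so $G_{\mathrm{VN}}$ is $C_3$-free. Likewise every $C_4$ in $G_{\mathrm{VN}}$ lifts to an 8-cycle.

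The second step treats the remaining members of $\mathcal{H}$, following Fig.~\ref{fig1}. Each of $\theta(2,2,2)$, $\theta(1,3,3)$ and $D(4,4;0)$ contains two distinct 4-cycles sharing at least one vertex:
\begin{enumerate}
\item in $\theta(2,2,2)$ with paths $x\,u_i\,y$ for $i=1,2,3$, take the cycles $x u_1 y u_2$ and $x u_1 y u_3$, which share $x$;
\item in $\theta(1,3,3)$, the two 4-cycles through the chord $xy$ share $x$ and $y$;
\item in $D(4,4;0)$, the two 4-cycles share the cut vertex.
\end{enumerate}
In each case the two 4-cycles lift to two distinct 8-cycles in the Tanner graph that share a variable node. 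This contradicts the hypothesis, so $G_{\mathrm{VN}}$ is $\mathcal{H}$-free. I also need to check that these lifted 8-cycles are genuine cycles: their variable nodes are distinct vertices of $G_{\mathrm{VN}}$, and their check nodes are distinct degree-2 checks. So they are simple 8-cycles, and they are distinct because their check-node sets differ.

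The main obstacle is that Theorem~\ref{th2} is stated only for $n\ge5$, so the case $a\le4$ needs a separate argument. For $a\le 4$, I would bound the number of edges of an $\mathcal{H}$-free graph directly. On 4 vertices the only such graphs are forests or $C_4$, giving at most 4 edges, and the bound evaluated at $a=4$ gives $\frac{4(\sqrt{73}-1)}{8}\approx 3.77$. That is smaller than 4, so the case needs care. The way out is that $C_4$ plus any extra edge creates a $C_3$, so the actual maximum is attained by $C_4$ itself. I would therefore verify that either the inequality still holds for the small cases or the degenerate small-$a$ cases (for instance $a=4$ with VN graph $C_4$) must be treated separately. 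Alternatively, I would note that the proof of Theorem~\ref{th2} uses only the diameter and counting argument, and check whether that argument in fact applies for all $n\ge1$. This is the point I would scrutinize most carefully before finalizing.
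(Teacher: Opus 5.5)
Your reduction is the paper's own: pass to the VN graph, which has $a$ vertices and $\frac12(a\gamma-b)$ edges, show it is $\mathcal{H}$-free, and invoke Theorem~\ref{th2}. Your second step is correct and supplies justification the paper only asserts. That step excludes $C_3$ by girth, and notes that each of $\theta(2,2,2)$, $\theta(1,3,3)$, $D(4,4;0)$ contains two distinct $4$-cycles through a common vertex, which lift to two distinct $8$-cycles through a common variable node.

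The issue you flagged at the end is real, but there is no ``way out'': at $a=4$ the statement is false. The paper's proof has the same defect, since it applies Theorem~\ref{th2}, proved only for $n\ge 5$, to every $a$.

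\emph{Counterexample at $a=4$.} Take the four variable nodes of a single $8$-cycle. Girth $8$ forbids any further check shared by two of them, since that would create a $4$- or $6$-cycle. So the VN graph is $C_4$ and the ETS is a $(4,4\gamma-8)$-ETS. Nothing in the hypothesis forbids an isolated $8$-cycle; for instance, attach it to far-apart checks of a girth-$\ge10$ variable-regular graph. The claimed bound demands $b\ge 4\gamma-(\sqrt{73}-1)\approx 4\gamma-7.54$, which $b=4\gamma-8$ violates. Your remark that $C_4$ plus any edge contains a triangle only confirms that $4$ is the true maximum, so it cannot rescue the inequality.

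\emph{Why Theorem~\ref{th2} does not extend to $n=4$.} The step $|Q|\le N_2/2$ fails for $C_4$: both distance-$2$ pairs lie in $Q$, so $|Q|=2>1=N_2/2$. For $n\ge5$ that step does hold. The extremal graph is connected, so each of the (vertex-disjoint) $4$-cycles has a vertex $x$ with an outside neighbour $z$. The pairs $\{z,u\},\{z,w\}$, with $u,w$ the cycle-neighbours of $x$, then have exactly one common neighbour.

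\emph{Small cases.} For $a\le3$ the bound can be checked directly: the maximum edge counts $0,1,2$ are at most the bound values $0,1,2.25$.

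So the correct resolution of your last paragraph is that the theorem needs the hypothesis $a\ge5$ (or $a\neq4$). Under that hypothesis your proof is complete.
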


\begin{IEEEproof}
    For the VN graph $G_{VN}=(V_{VN},E_{VN})$ of an $(a,b)$-ETS with variable-regular degree of $\gamma$, we have $|V_{VN}|=a$, $E_{VN}=\frac{a\gamma-b}{2}$. Since there is no $\theta(2,2,2), \theta(1, 3, 3), D(4,4;0)$ in $G_{VN}$, $|E(VN)|$ is bounded above by $ex(a,\{C_3, \theta(2,2,2), \theta(1, 3, 3), D(4,4;0)\})$. Based on Theorem~\ref{th2}, we can obtain $E_{VN}=\frac{a\gamma-b}{2}\leq\frac{a(\sqrt{24a-23}-1)}{8}$, which further implies $b\geq a\gamma-\frac{a(\sqrt{24a-23}-1)}{4}$. 
\end{IEEEproof}

For a given value of $\gamma$, applying this bound allows us to theoretically demonstrate the non-existence of certain small ETSs. Although the minimum value of a determined by the bound may sometimes be less than the exact value, using this bound can significantly reduce the complexity of enumeration. Taking into account the parity conditions that $a$ and $b$ must satisfy, Corollary~\ref{co7} and Corollary~\ref{co8} present the minimum values of a derived from the bound $b\geq a\gamma-\frac{a(\sqrt{24a-23}-1)}{4}$ for $\gamma = 3$ and $\gamma = 4$, respectively. These values correspond to the feasible sizes of an $(a, b)$-ETS under our conditions.

\begin{corollary}\label{co7}
    For LDPC codes with girth 8 and $\gamma = 3$, within a Tanner graph where any two 8-cycles share no common variable node, if $b < a$, then the smallest sizes of an $(a, b)$-ETS are $a = 10, 11, 10, 7$ for $b = 0, 1, 2, 3$, respectively.
\end{corollary}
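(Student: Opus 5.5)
The plan is to combine Theorem~\ref{th6} with $\gamma=3$ and the parity constraint, and then remove the few values of $a$ that the bound alone does not exclude by a finer look at the VN graph. Since $|E_{\mathrm{VN}}|=\frac{3a-b}{2}$ is an integer, $b\equiv a \pmod 2$. For $a\ge 5$, Theorem~\ref{th6} gives $b\ge h(a):=3a-\frac{a(\sqrt{24a-23}-1)}{4}$. For $a\le 4$, Theorem~\ref{th2} does not apply. There I would argue directly: the VN graph is triangle-free (the girth is 8), so $|E_{\mathrm{VN}}|\le\lfloor a^2/4\rfloor$, which forces $b\ge a$.

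Next I tabulate $h$. It gives $h(5)\approx 3.94$, $h(6)=3$, $h(7)\approx 1.67$, $h(8)=0$, $h(9)\approx-2.0$ and $h(10)\approx -4.3$. For $b=3$ we need $a$ odd with $h(a)\le 3$. This rules out $a=5$ and leaves $a=7$ as the candidate, since $h(7)<3$. For $b=0,2$ (so $a$ even) the bound excludes $a\le 6$ but allows $a=8$. For $b=1$ (so $a$ odd) it excludes $a\le 7$ but allows $a=9$. So the crude bound gives $8,9,8,7$, and the claimed values $10,11,10,7$ need three extra exclusions: $(8,0)$, $(8,2)$ and $(9,1)$. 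That is the main obstacle.

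For these cases I would sharpen the counting in the proof of Theorem~\ref{th2}. In the VN graph, each $C_4$ comes from an 8-cycle in the Tanner graph. By hypothesis these cycles are pairwise variable-disjoint, so there are at most $\lfloor a/4\rfloor$ of them. There are no triangles, no $\theta(2,2,2)$ and no $\theta(1,3,3)$, so each $C_4$ contributes exactly two pairs to $Q$. This gives $\sum_v\binom{d(v)}{2}=N_2+|Q|\le \binom{a}{2}-m+2\lfloor a/4\rfloor$, where $m=|E_{\mathrm{VN}}|$.

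For $(8,0)$ the graph is cubic with $m=12$, so the left side is $24$ while the right side is $20$, a contradiction. For $(8,2)$ the left side is $20$ and equality holds. Then the graph has diameter 2, contains exactly two disjoint $C_4$'s and has degree sequence $(2,2,3,\dots,3)$. I would rule this out by a short local analysis: take the two disjoint 4-cycles covering all 8 vertices, examine the 3 remaining edges between them, and show that some pair is at distance 3.

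For $(9,1)$ the counting leaves slack, so I would use a ball-growing argument instead. Take a degree-3 vertex $v$ lying on no $C_4$; one exists because at most two $C_4$'s cover at most 8 vertices. Then $v$'s second neighbourhood has no collisions, so $|B_2(v)|\ge 1+3+5=9$, with equality only if $v$ is adjacent to the unique degree-2 vertex and the graph has diameter 2. A case check of that rigid configuration against the $C_4$ and $D(4,4;0)$ restrictions should give a contradiction. If the hand analysis gets messy, I would fall back on an exhaustive enumeration of triangle-free graphs with these degree sequences on 8 and 9 vertices.

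Finally, to justify that these sizes really are the smallest, I would exhibit VN graphs attaining them, which then lift to Tanner substructures. For $(10,0)$ take the Petersen graph, which has girth 5 and hence no $C_4$ at all. For $(10,2)$ take the Petersen graph minus an edge. For $(7,3)$ a suitable triangle-free graph with degrees $(2,2,2,3,3,3,3)$ works: a 7-cycle plus two chords, chosen so there is no triangle and no two 4-cycles share a vertex. For $(11,1)$ I would use a girth-5 graph on 11 vertices with degree sequence $(2,3,\dots,3)$.
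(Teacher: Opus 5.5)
\textbf{Main gap: the $b=3$ entry.} The step that fails is the existence claim for $b=3$, and it cannot be repaired.

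A $7$-cycle with two chords never works. A chord that avoids a triangle joins vertices at distance $3$ along the cycle, so it closes a $4$-cycle on four consecutive vertices. Two such $4$-sets among seven vertices must meet.

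More decisively, your own sharpened inequality rules out every $(7,3)$-ETS. Here $m=9$, and at most $\lfloor 7/4\rfloor=1$ four-cycle fits. So $\sum_v\binom{d(v)}{2}\le\binom{7}{2}-9+2=14$. By convexity, every degree sequence on $7$ vertices with sum $18$ and maximum $3$ gives at least $15$, with the minimum attained by $(3,3,3,3,2,2,2)$.

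For $b=3$, $a$ must be odd, and $a\le 5$ is already excluded. Hence the true minimum for $b=3$ is $a=9$. It is attained by the Petersen graph minus a vertex: girth $5$, $12$ edges, degrees $(2,2,2,3,\dots,3)$.

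So the $b=3$ entry of the statement is false under its own hypothesis, and no proof can establish it. The paper supports it only by the bound of Theorem~\ref{th6}, parity, and the drawing in Fig.~\ref{fig5}(d). That drawing must therefore contain a triangle or two $4$-cycles sharing a vertex.

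\textbf{Secondary repairs.} Your exclusions of $(8,2)$ and $(9,1)$ reach correct conclusions, but both need repair.

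For $(8,2)$, the right-hand side is $28-11+4=21$, not $20$. With degrees $(2,2,3,\dots,3)$ you get exactly two disjoint $C_4$'s and one non-adjacent pair not at distance $2$. Finding a distance-$3$ pair is therefore no contradiction. What does work: there are no chords, so the three extra edges form a matching between three vertices of each cycle. The middle vertex of one triple is adjacent to the other two. Its partner is adjacent to at least one of their partners, which closes a third $C_4$ meeting both cycles. You also missed the degree sequence $(1,3,\dots,3)$. There equality holds, forcing every non-adjacent pair to be at distance $2$, yet the degree-$1$ vertex has four vertices at distance at least $3$.

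For $(9,1)$, the count gives $1\le c\le 2$ for the number $c$ of $4$-cycles.
\begin{itemize}
\item An uncovered degree-$3$ vertex is not automatic when $c=2$: the lone uncovered vertex may be the degree-$2$ vertex $w$. Then $w$ has one neighbour on each cycle, and the same matching argument finishes.
\item $|B_2(v)|=9$ gives only eccentricity $2$ for $v$, not diameter $2$.
\item What your ball argument does give is that every uncovered degree-$3$ vertex is adjacent to $w$.
\item That is impossible for $c=1$, where there are at least four such vertices. It is also impossible for $c=2$ with uncovered vertex $x\ne w$, since both neighbours of $w$ then lie on its $C_4$.
\end{itemize}

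For $(11,1)$, the Petersen graph with one edge subdivided is a concrete example.

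With these repairs, your counting method proves the values $10,11,10$ for $b=0,1,2$ by hand, which the paper does not do. It yields $9$, not $7$, for $b=3$.
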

\begin{corollary}\label{co8}
    For LDPC codes with girth 8 and $\gamma = 4$, within a Tanner graph where any two 8-cycles share no common variable node, if $b < a$, then the smallest sizes of an $(a, b)$-ETS are at least $a = 13, 13, 12, 10$ for $b = 0, 2, 4, 6$, respectively.
\end{corollary}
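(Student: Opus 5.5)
The plan is to specialize the inequality of Theorem~\ref{th6} to $\gamma=4$ and, for each target value of $b$, find the least integer $a$ compatible with it. Concretely, Theorem~\ref{th6} gives
\[
h_b(a):=\frac{a\bigl(\sqrt{24a-23}-1\bigr)}{4}-4a+b\;\geq\;0
\]
for every $(a,b)$-ETS under the stated hypotheses. I would also record the parity constraint: $|E_{\mathrm{VN}}|=\frac{4a-b}{2}$ must be an integer, so $b$ is even. This is why only $b=0,2,4,6$ appear when $b<a$ and $a$ is small.

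\textbf{Step 1: small $a$.} Theorem~\ref{th2} is only stated for $n\geq5$, so the cases $a\leq4$ need a separate argument. I would use the trivial bound $|E_{\mathrm{VN}}|\leq\binom{a}{2}$, which gives $b\geq 4a-a(a-1)=a(5-a)\geq a$ for $1\le a\le 4$. This contradicts $b<a$, so no such ETS exists with $a\leq4$.

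\textbf{Step 2: monotonicity of $h_b$ for $a\geq5$.} Differentiating gives
\[
h_b'(a)=\frac{\sqrt{24a-23}}{4}+\frac{3a}{\sqrt{24a-23}}-\frac{17}{4}.
\]
By AM--GM, $\frac{x}{4}+\frac{3a}{x}\geq\sqrt{3a}$ with $x=\sqrt{24a-23}$, and $\sqrt{3a}\geq \frac{17}{4}$ as soon as $a\geq 7$. So $h_b$ is increasing on $a\geq7$. For $a=5,6$ I would simply evaluate $h_b$ directly and check it is negative for each $b\leq6$. Together, these reduce the problem to finding the first $a$ at which $h_b(a)\ge0$.

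\textbf{Step 3: evaluate near the threshold.}
For $b=0$, the condition becomes $\sqrt{24a-23}\geq17$, i.e.\ $a\geq13$.
For $b=2$, $h_2(12)=\frac{12(\sqrt{265}-1)}{4}-46\approx45.8-46<0$, while $h_2(13)=55.25-50>0$.
For $b=4$, $h_4(11)\approx39.9-40<0$ and $h_4(12)\approx45.8-44>0$.
For $b=6$, $h_6(9)\approx29.0-30<0$ and $h_6(10)\approx34.3-34>0$.
In each case the resulting $a$ also satisfies $b<a$. By monotonicity, these give the lower bounds $a\geq13,13,12,10$. The phrase ``at least'' in the statement reflects that these are only necessary conditions from the bound, not constructions.

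\textbf{Main obstacle.} The delicate points are the near-ties $h_2(12)$ and $h_4(11)$, where the margin is about $0.2$. There I would verify the sign exactly by squaring rather than by decimals. For example, $h_2(12)<0$ is equivalent to $3(\sqrt{265}-1)<46$, i.e.\ $\sqrt{265}<49/3$, i.e.\ $265<2401/9\approx266.8$, which holds. The analogous squared inequality handles $h_4(11)$ and $h_6(9)$.
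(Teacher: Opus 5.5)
Your proposal is correct and takes the same route the paper indicates: specialize the bound of Theorem~\ref{th6} to $\gamma=4$, use that $4a-b$ must be even, and find the first admissible $a$ for each even $b$. Your separate treatment of $a\le 4$ (where Theorem~\ref{th2} does not apply) and your monotonicity check make the argument more careful than the paper's one-line justification. There is one harmless slip: $h_2(13)=\frac{13\cdot 16}{4}-50=2$, not $55.25-50$, because the $-1$ was dropped. Also, the margin at $h_4(11)$ is only about $0.06$, not $0.2$, but your squaring check ($241\cdot 121=29161<171^2=29241$) settles that case correctly.
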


\begin{figure}[!t]
\centering
\includegraphics[width=3.1in]{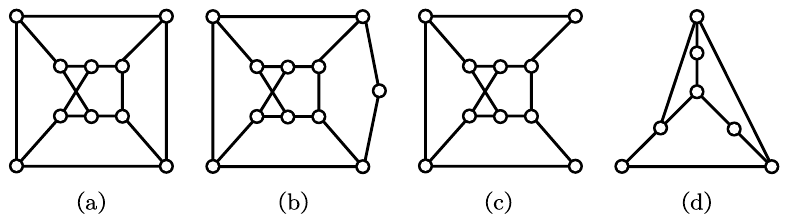}
\caption{ Figures (a), (b), (c), and (d) depict the VN graphs of (10,0), (11,1), (10,2) and (7,3)-ETSs, respectively, in the Tanner graph with $g=8$ and $\gamma=3$.}
\label{fig5}
\end{figure}

It is worth noting that in Corollary~\ref{co8}, we particularly emphasize 'at least' because the existence of the corresponding $(a,b)$-ETS has not been verified. As the values of $a$, $b$ and $\gamma$ increase, verifying the existence of the $(a,b)$-ETS becomes increasingly difficult.

The VN graphs corresponding to the substructures $T_1$, $T_2$, and $T_3$ in Tanner graph are $G_1$, $G_2$, and $G_3$, respectively. When the girth is 6 and if Tanner graph is $T_i $ free$(i=1,2,3)$, the VN graph is $G_i$ free. This leads to the following theorem.

\begin{theorem}\label{th9}
    For an $(a,b)$-ETS in a Tanner graph with girth 6 and variable-regular degree $d_L(v) = \gamma$, then the following statements hold:
    \begin{enumerate}
    \item if the ETS is $T_1$-free, then for $a \geq 5$, we have $b \geq a \gamma -2\lfloor\frac{a^2+a}{4}\rfloor$;
    \item if the ETS is $T_2$-free, then for $a \geq 7$, we have $b \geq a \gamma -2\lfloor\frac{a^2}{4}\rfloor-2$;
    \item if the ETS is $T_3$-free, then for $a \geq 7$, we have $b \geq a \gamma -2\lfloor\frac{a^2}{4}\rfloor-2\lceil \frac{a}{2}\rceil+2$.
\end{enumerate}
\end{theorem}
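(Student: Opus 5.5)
The plan mirrors the proof of Theorem~\ref{th6}: translate the forbidden Tanner-graph substructure into a forbidden subgraph of the VN graph, then bound the edge count of the VN graph with the Tur\'{a}n numbers from Section III. Fix an $(a,b)$-ETS with VN graph $G_{VN}$. Because the Tanner graph is variable-regular of degree $\gamma$ and has girth $6>4$, $G_{VN}$ is a simple graph with $|V_{VN}|=a$ and $|E_{VN}|=\frac{a\gamma-b}{2}$.

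The first step is to check that $T_i$-freeness of the ETS forces $G_{VN}$ to be $G_i$-free. Suppose $G_{VN}$ contains a copy of $G_i$. Each edge of that copy is a degree-2 check node joining two variable nodes of the ETS. Distinct edges correspond to distinct check nodes, since a check node of degree 2 determines its pair. Hence the copy of $G_i$ lifts to a subgraph of the Tanner graph isomorphic to $T_i$, whose VN graph is $G_i$ by Fig.~\ref{fig4}, contradicting $T_i$-freeness. Girth 6 only guarantees simplicity; triangles in $G_{VN}$ are allowed, which is why the non-bipartite Tur\'{a}n results apply here.

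The second step is to invoke the exact Tur\'{a}n numbers. For $G_1$, Theorem~\ref{th3} gives $ex(a,G_1)=\lfloor\frac{a^2+a}{4}\rfloor$ for $a\ge 6$. For $a=5$, Lemma~\ref{lemadd} gives $ex(5,G_1)=7$, and this equals $\lfloor 30/4\rfloor$, so the formula holds for all $a\geq 5$. For $G_2$, Theorem~\ref{th4} covers $a\ge 8$, and Lemma~\ref{lem2} gives $ex(7,G_2)=13=\lfloor 49/4\rfloor+1$, so the formula $\lfloor\frac{a^2}{4}\rfloor+1$ holds for $a\ge7$. For $G_3$, Theorem~\ref{th5} covers $a\ge9$. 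Lemma~\ref{lem4} gives $ex(7,G_3)=15=12+4-1$ and $ex(8,G_3)=19=16+4-1$, so the formula $\lfloor\frac{a^2}{4}\rfloor+\lceil\frac a2\rceil-1$ holds for $a\ge 7$.

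The last step combines these: $\frac{a\gamma-b}{2}\le ex(a,G_i)$, and rearranging gives the three inequalities. In case 2, for example, $b\ge a\gamma-2\lfloor a^2/4\rfloor-2$. The main obstacle is the first step, that is, making sure the correspondence between $T_i$ and $G_i$ in Fig.~\ref{fig4} is exact, so that every copy of $G_i$ in $G_{VN}$ really yields a copy of $T_i$. The other delicate point is checking that the small base cases $a=5$, $7$ and $8$ agree with the closed-form formulas, which the arithmetic above confirms.
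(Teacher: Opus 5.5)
Your proposal is correct and follows the same route the paper intends. The paper gives no separate proof, only the remark that $T_i$-freeness makes $G_{VN}$ $G_i$-free, after which the argument is the one used for Theorem~\ref{th6}: bound $|E_{VN}|=\frac{a\gamma-b}{2}$ by $ex(a,G_i)$ and rearrange. Your explicit check that the closed forms also hold at $a=5$ (Lemma~\ref{lemadd}), at $a=7$ (Lemma~\ref{lem2}) and at $a=7,8$ (Lemma~\ref{lem4}) fills in a detail the paper leaves implicit, since Theorems~\ref{th3}, \ref{th4} and \ref{th5} only cover $a\ge 6$, $a\ge 8$ and $a\ge 9$ respectively.
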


When the Tanner graph is free of $T_1$, $T_2$ or $T_3$, we determine the minimum value of $a$ corresponding to different values of $b$. These results are presented in Corollary~\ref{co10}.
\begin{corollary}\label{co10}
For LDPC codes with girth 6 and $\gamma = 3, 4, 5$, within a Tanner graph which is $T_i(i=1,2,3)$ free, if $b < a$, then the smallest sizes of $(a, b)$-ETSs are shown in the Table \ref{tab1}.    
\end{corollary}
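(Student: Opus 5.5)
Corollary~\ref{co10} follows by a finite computation from Theorem~\ref{th9}, combined with the parity constraint on $b$ and, where needed, explicit constructions. First I fix $\gamma\in\{3,4,5\}$ and $i\in\{1,2,3\}$. The VN graph has $\frac{1}{2}(a\gamma-b)$ edges, so $a\gamma-b$ must be even. Hence $b\equiv a\gamma \pmod 2$, and for each admissible $b<a$ only values of $a$ with the right parity need to be considered.

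For each $b$ in the range of the table, I will find the least $a>b$ with $a\gamma-b$ even for which the relevant inequality of Theorem~\ref{th9} fails to exclude the ETS. The inequalities are $b\ge a\gamma-2\lfloor\frac{a^2+a}{4}\rfloor$ for $T_1$, $b\ge a\gamma-2\lfloor\frac{a^2}{4}\rfloor-2$ for $T_2$, and $b\ge a\gamma-2\lfloor\frac{a^2}{4}\rfloor-2\lceil\frac a2\rceil+2$ for $T_3$. Each right-hand side is decreasing in $a$ beyond a small threshold, because the quadratic term dominates $a\gamma$. So scanning $a=b+1,b+2,\dots$ and checking the inequality gives the first candidate $a$. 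For the small sizes $a<5$ (for $T_1$) or $a<7$ (for $T_2,T_3$), where Theorem~\ref{th9} is not stated, I will use the exact values from Lemma~\ref{lemadd}, Lemma~\ref{lem2} and Lemma~\ref{lem4}. Below those sizes I will use the trivial bound $\binom a2$, together with girth $6$ (no multiple edges) and the degree bound $\Delta(G_{VN})\le\gamma$.

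The degree cap $\Delta\le\gamma$ is often much stronger than the Tur\'an bound for small $\gamma$. So I will also impose $\frac{1}{2}(a\gamma-b)\le \min\{ex(a,G_i),\frac{a\gamma}{2}\}$ and take the stronger of the two constraints. I will also require that a VN graph with degree sequence $(\gamma-\text{(number of degree-1 checks)})$ exists: each vertex has degree at most $\gamma$, and the degree deficits sum to $b$.

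The main obstacle is showing that the computed values are actually attained, rather than merely lower bounds. For each table entry, I plan to exhibit a VN graph on $a$ vertices with $\frac{1}{2}(a\gamma-b)$ edges, maximum degree at most $\gamma$, and no copy of $G_i$, and then realize it as a Tanner subgraph of girth $6$. A natural first candidate is a subgraph of the extremal graphs of Theorems~\ref{th3}, \ref{th4}, \ref{th5}, such as $tK_2\vee\overline{K_{s}}$, trimmed to maximum degree $\gamma$; a small $\gamma$-regular graph such as $K_{\gamma+1}$ minus edges is another option. Girth $6$ in the Tanner graph only requires the VN graph to be simple. The VN graph need not avoid triangles, since a triangle corresponds to a $6$-cycle, which is allowed. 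If an entry cannot be realized, I would instead report it as "at least", as the paper does in Corollary~\ref{co8}.
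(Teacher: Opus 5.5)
Your plan follows the only derivation the paper implicitly relies on: read the least admissible $a$ off Theorem~\ref{th9} under the parity condition $b\equiv a\gamma \pmod 2$, and treat small $a$ separately. Your extra ingredients, the cap $\Delta(G_{\mathrm{VN}})\le\gamma$ and explicit VN graphs for attainment, are what that route needs. The gap is that you never run the computation, and when it is run it does not reproduce Table~\ref{tab1}.
\begin{itemize}
\item Taking $b<a$ literally, $(\gamma,b)=(3,3)$ forces $a$ odd with $a\ge 5$. A $6$-edge subgraph of $(K_2+K_1)\vee\overline{K_2}$ with maximum degree $3$ attains $a=5$, so every column should read $5$, not $3$; the triangle has $b=a$.
\item For $(\gamma,b)=(4,4)$, the $T_2$ and $T_3$ entries should be $5$ (use $K_5$ minus two edges), not $4$.
\item Reading the hypothesis as $b\le a$ does not repair this. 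The $T_1$ entry at $(4,4)$ would then be $4$, since $K_4\subseteq G_1'$ is $G_1$-free by Lemma~\ref{lemadd}. The $T_2$ entry at $(\gamma,b)=(5,5)$ would be $5$, since $K_5$ is $G_2$-free by Lemma~\ref{lem2}.
\item Most seriously, take $T_3$ with $(\gamma,b)=(5,5)$. Theorem~\ref{th9}(3) at $a=7$ gives only $b\ge 35-24-8+2=5$. The graph $(K_3+K_1)\vee\overline{K_3}$, which Lemma~\ref{lem4} lists as $G_3$-free, has $15$ edges and degrees $5,5,5,4,4,4,3$. It is therefore the VN graph of a $(7,5)$-ETS, contradicting the tabulated $9$.
\end{itemize}
These entries are false, so neither your method nor any other can prove them. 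What your method can deliver is a corrected table.

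Even for entries that may be correct, your lower-bound test is too weak. Consider $T_3$ with $\gamma=5$. The test $\tfrac12(a\gamma-b)\le\min\{ex(a,G_3),a\gamma/2\}$, which is Theorem~\ref{th9} plus the degree cap, is passed by $a=8$ for $b=2$ and $b=4$, and by $a=9$ for $b=1$. The table instead claims $10$, $10$ and $11$.
\begin{itemize}
\item The case $b=2$ has $19=ex(8,G_3)$ edges. It can be excluded only through the \emph{characterization} in Lemma~\ref{lem4}: the two $8$-vertex extremal graphs have maximum degree $6$ and $7$, respectively.
\item The cases $b=4$ ($18$ edges on $8$ vertices) and $b=1$ ($22$ edges on $9$ vertices) lie strictly below the Tur\'an number, so no extremal result in the paper applies. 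You would need a separate degree-constrained argument about $G_3$-free graphs with $\Delta\le 5$, or an exhaustive search.
\end{itemize}
Your fallback of reporting unrealized entries as ``at least'' would certify only $8$, $8$ and $9$ there. That is a different statement from the one claimed.
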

\begin{table}[h]
    \centering
    
    \caption{The minimum size of an $(a, b)$-ETS, where $b < a$, within an LDPC code of girth 6 and variable-regular degree $\gamma$, ensures that the Tanner graph is $T_i$-free. }
    
    \begin{tabular}{|c|c|c|c|c|}
        \hline
        $\gamma$ & $b$ & $T_1$-free & $T_2$-free & $T_3$-free \\ \hline
        \multirow{4}*{3} & 0 & 4 & 4 & 4 \\ \cline{2-5}
        & 1 & 5 & 5 & 5 \\ \cline{2-5}
        & 2 & 4 & 4 & 4 \\ \cline{2-5}
        & 3 & 3 & 3 & 3 \\ \hline
        \multirow{3}*{4} & 0 & 7 & 5 & 5 \\ \cline{2-5}
        & 2 & 7 & 5 & 5 \\ \cline{2-5}
        & 4 & 6 & 4 & 4 \\ \hline
        \multirow{8}*{5} & 0 & 10 & 10 & 10 \\ \cline{2-5}
        & 1 & 9 & 11 & 11 \\ \cline{2-5}
        & 2 & 10 & 10 & 10 \\ \cline{2-5}
        & 3 & 9 & 9 & 9 \\ \cline{2-5}
        & 4 & 8 & 10 & 10 \\ \cline{2-5}
        & 5 & 9 & 9 & 9 \\ \cline{2-5}
        & 6 & 8 & 8 & 8 \\ \hline
    \end{tabular}
    
    \label{tab1}
\end{table}

We point out that for LDPC codes with girth $g = 6$ and variable regular degree $d_L(v) = 3$, if the Tanner graph is free of the structures $T_i$ ($i = 1, 2, 3$), then the minimum sizes of $(a,b)$-ETS corresponding to given $b = 0, 1, 2, 3$ are $a = 4, 5, 4, 3$, respectively. The minimum sizes do not satisfy the inequalities in Theorem~\ref{th9}. The reason is that theorem restricts the range of the positive integer $a$, and when $a$ is relatively small the bounds in the theorem are not applicable. The problem does not occur when $d_L(v) \geq 5$.

The theoretical bounds derived from the Tur\'{a}n numbers of special graphs reflect the connection between these structures and ETS. This approach reduces the computational complexity and provides theoretical guidance for improving the error floor phenomenon of LDPC codes.

\section{Spectral radius analysis and simulation}

In this section, we further investigate the effects of these structural constraints from two perspectives. First, we analyze the spectral radii of the system matrices associated with ETSs under the proposed structural restrictions. Second, we construct QC-LDPC codes that satisfy these restrictions and evaluate their error-floor performance through simulations.

\subsection{Spectral Radius of ETSs}

According to the linear state-space model for ETSs, the spectral radius of the associated system matrix is a key parameter that affects decoding behavior. In particular, a larger spectral radius generally indicates that erroneous messages within an ETS decay more slowly during iterative decoding, thereby making the ETS more harmful in the error floor region. Motivated by this observation, we examine whether the structural constraints proposed in Section IV also reduce the spectral radii of the remaining ETSs.

We first consider Tanner graphs in which no two 8-cycles share a common variable node. In this case, the corresponding VN graph of an ETS contains no pair of 4-cycles sharing at least one common vertex. To quantify the effect of this restriction, we partition all $(a,b)$-ETSs into two classes, denoted by $\mathrm{int}_{C_4}(a,b)$ and $\mathrm{ind}_{C_4}(a,b)$. The set $\mathrm{int}_{C_4}(a,b)$ consists of ETSs whose VN graphs contain one of the subgraphs $\theta(1,3,3)$, $\theta(2,2,2)$, or $D(4,4;0)$, whereas $\mathrm{ind}_{C_4}(a,b)$ consists of ETSs whose VN graphs are free of all these subgraphs. For each ETS in these two classes, we compute the spectral radius of its system matrix and then report the corresponding mean and median values.

Using the graph tool `nauty' \cite{nauty}, we enumerated all non-isomorphic VN graphs satisfying the required conditions. Table \ref{tab2} summarizes the results for all $(a, b)$-ETSs with $\gamma = 3$, $a \leq 14$, and $b \leq 6$. For all listed parameter pairs, both the mean and the median spectral radii of the ETSs in $\text{ind}_{C_4}(a,b)$ are smaller than those of the ETSs in $\text{int}_{C_4}(a,b)$. This observation suggests that ETSs whose VN graphs avoid the above substructures tend to be less harmful from the perspective of spectral radius. Therefore, imposing the condition that 8-cycles in the Tanner graph do not share at least one variable node can effectively eliminate ETSs with relatively large spectral radii and is thus beneficial to the error floor performance of LDPC codes.

Next, we consider the short cycles with chords $T_i$, where $i \in \{1, 2, 3\}$. For each $T_i$, we divide all $(a, b)$-ETSs into two sets: $\mathcal{F}_{G_i}(a,b)$, consisting of ETSs whose VN graphs are $G_i$-free, and $\mathcal{U}_{G_i}(a,b)$, consisting of ETSs whose VN graphs contain $G_i$ as a subgraph. For each ETS in these two sets, we compute the spectral radius of the associated system matrix and the corresponding mean and median values for each class.

\begin{table*}[!t]
\centering
\caption{The Number of $(a, b)$-ETSs in $\text{ind}_{C_4}(a, b)$ and $\text{int}_{C_4}(a, b)$,  
and the Corresponding Mean and Median Values of the Spectral Radius for $a, b$, and $\gamma = 3$}
\label{tab2}
\begin{tabular}{cccc|cccc|cccc}
\toprule
Sets of ETSs & Num. & $\rho_\text{mean}$ & $\rho_\text{median}$ & Sets of ETSs & Num. & $\rho_\text{mean}$ & $\rho_\text{median}$ & Sets of ETSs & Num. & $\rho_\text{mean}$ & $\rho_\text{median}$ \\
\toprule
$\text{ind}_{C_4}$(10,2) & 8  & 1.82994 & 1.82982 & $\text{ind}_{C_4}$(11,5) & 127 & 1.58806 & 1.58560 & $\text{ind}_{C_4}$(13,3) & 603  & 1.81013 & 1.80602 \\
$\text{int}_{C_4}$(10,2) & 20 & 1.84185 & 1.83727 & $\text{int}_{C_4}$(11,5) & 54  & 1.61762 & 1.61279 & $\text{int}_{C_4}$(13,3) & 494 & 1.82677 & 1.82408 \\
\midrule
$\text{ind}_{C_4}$(10,4) & 28 & 1.63866 & 1.63725 & $\text{ind}_{C_4}$(12,2) & 83  & 1.86560 & 1.86271 & $\text{ind}_{C_4}$(13,5) & 1369 & 1.66501 & 1.66170 \\
$\text{int}_{C_4}$(10,4) & 35 & 1.66093 & 1.65234 & $\text{int}_{C_4}$(12,2) & 107 & 1.87291 & 1.86741 & $\text{int}_{C_4}$(13,5) & 597  & 1.69008 & 1.68172 \\
\midrule
$\text{ind}_{C_4}$(10,6) & 42 & 1.43289 & 1.43199 & $\text{ind}_{C_4}$(12,4) & 322 & 1.71159 & 1.71200 & $\text{ind}_{C_4}$(14,2) & 879  & 1.88828 & 1.88514 \\
$\text{int}_{C_4}$(10,6) & 7 & 1.46695 & 1.46185 & $\text{int}_{C_4}$(12,4) & 217 & 1.73294 & 1.72456 & $\text{int}_{C_4}$(14,2) & 777  & 1.89284 & 1.88830 \\
\midrule
$\text{ind}_{C_4}$(11,1) & 8  & 1.92901 & 1.92899 & $\text{ind}_{C_4}$(12,6) & 402 & 1.54529 & 1.54220 & $\text{ind}_{C_4}$(14,4) & 3697 & 1.76194 & 1.76038 \\
$\text{int}_{C_4}$(11,1) & 15 & 1.92988 & 1.92934 & $\text{int}_{C_4}$(12,6) & 110 & 1.57941 & 1.57147 & $\text{int}_{C_4}$(14,4) & 2087 & 1.77728 & 1.77106 \\
\midrule
$\text{ind}_{C_4}$(11,3) & 56 & 1.76573 & 1.76264 & $\text{ind}_{C_4}$(13,1) & 75  & 1.94118 & 1.94103 & $\text{ind}_{C_4}$(14,6) & 5084 & 1.62441 & 1.62030  \\
$\text{int}_{C_4}$(11,3) & 66 & 1.78342 & 1.77965 & $\text{int}_{C_4}$(13,1) & 87  & 1.94213 & 1.94153 & $\text{int}_{C_4}$(14,6) & 1521 & 1.65284 & 1.64367 \\
\bottomrule
\end{tabular}
\end{table*}

\begin{table*}[!t]
\centering
\caption{The Number of $(a, b)$-ETSs in $\mathcal{F}_{G_i}(a, b)$ and $\mathcal{U}_{G_i}(a, b)$ with $i\in \{ 1,2,3\}$,  
and the Corresponding Mean and Median Values of the Spectral Radius for $a, b$, and $\gamma = 4$}
\label{tab3}
\begin{tabular}{cccc|cccc|cccc}
\toprule
Set of ETSs & Num. & $\rho_{\text{mean}}$ & $\rho_{\text{median}}$ & Set of ETSs & Num. & $\rho_{\text{mean}}$ & $\rho_{\text{median}}$ & Set of ETSs & Num. & $\rho_{\text{mean}}$ & $\rho_{\text{median}}$ \\
\toprule
$\mathcal{F}_{G_1}$(8,2)  & 13   & 2.79279 & 2.78237 & $\mathcal{F}_{G_2}$(8,2)  & 9    & 2.78968 & 2.77971 & $\mathcal{F}_{G_3}$(8,2)  & 4    & 2.79741 & 2.78264\\
$\mathcal{U}_{G_1}$(8,2)  & 22   & 2.80125 & 2.79270 & $\mathcal{U}_{G_2}$(8,2)  & 26   & 2.80103 & 2.79231 & $\mathcal{U}_{G_3}$(8,2)  & 31   & 2.79820 & 2.79191\\
\midrule
$\mathcal{F}_{G_1}$(8,4)  & 55   & 2.56767 & 2.55509 & $\mathcal{F}_{G_2}$(8,4)  & 40   & 2.56995 & 2.55627 & $\mathcal{F}_{G_3}$(8,4)  & 28   & 2.59767 & 2.58721\\
$\mathcal{U}_{G_1}$(8,4)  & 69   & 2.60658 & 2.59440 & $\mathcal{U}_{G_2}$(8,4)  & 84   & 2.59855 & 2.59218 & $\mathcal{U}_{G_3}$(8,4)  & 96   & 2.58689 & 2.58730\\
\midrule
$\mathcal{F}_{G_1}$(9,2)  & 68   & 2.81956 & 2.80842 & $\mathcal{F}_{G_2}$(9,2)  & 47   & 2.82123 & 2.81139 & $\mathcal{F}_{G_3}$(9,2)  & 22   & 2.82206 & 2.81236\\
$\mathcal{U}_{G_1}$(9,2)  & 86   & 2.82500 & 2.81755 & $\mathcal{U}_{G_2}$(9,2)  & 107  & 2.82319 & 2.81058 & $\mathcal{U}_{G_3}$(9,2)  & 132  & 2.82269 & 2.81056\\
\midrule
$\mathcal{F}_{G_1}$(9,4)  & 342  & 2.62755 & 2.62685 & $\mathcal{F}_{G_2}$(9,4)  & 248  & 2.62909 & 2.62878 & $\mathcal{F}_{G_3}$(9,4)  & 164  & 2.63555 & 2.62467\\
$\mathcal{U}_{G_1}$(9,4)  & 321  & 2.65354 & 2.64828 & $\mathcal{U}_{G_2}$(9,4)  & 415  & 2.64673 & 2.64342 & $\mathcal{U}_{G_3}$(9,4)  & 499  & 2.64164 & 2.64134\\
\midrule
$\mathcal{F}_{G_1}$(10,2) & 438  & 2.83814 & 2.82928 & $\mathcal{F}_{G_2}$(10,2) & 257  & 2.83875 & 2.83034 & $\mathcal{F}_{G_3}$(10,2) & 175  & 2.83883 & 2.83068\\
$\mathcal{U}_{G_1}$(10,2) & 407  & 2.84299 & 2.83388 & $\mathcal{U}_{G_2}$(10,2) & 588  & 2.84123 & 2.83112 & $\mathcal{U}_{G_3}$(10,2) & 670  & 2.84090 & 2.83071\\
\midrule
$\mathcal{F}_{G_1}$(10,4) & 2441 & 2.67138 & 2.67301 & $\mathcal{F}_{G_2}$(10,4) & 1653 & 2.67219 & 2.67179 & $\mathcal{F}_{G_3}$(10,4) & 1225 & 2.67173 & 2.66428\\
$\mathcal{U}_{G_1}$(10,4) & 1732 & 2.69026 & 2.68738 & $\mathcal{U}_{G_2}$(10,4) & 2520 & 2.68383 & 2.68393 & $\mathcal{U}_{G_3}$(10,4) & 2948 & 2.68232 & 2.68323\\
\bottomrule
\end{tabular}
\end{table*}

The results for all $(a, b)$-ETSs with $\gamma = 4$, $a \leq 10$, and $b \leq 4$ are presented in Table~\ref{tab3}. For each $T_i$ and each listed parameter pair $(a, b)$, the mean and median spectral radii of the ETSs in $\mathcal{F}_{G_i}(a,b)$ are consistently smaller than those in $\mathcal{U}_{G_i}(a,b)$. This finding indicates that ETSs containing $G_i$ are generally more detrimental than those that avoid $G_i$. Consequently, excluding the substructures $T_i$ from the Tanner graph can help reduce the occurrence of ETSs with relatively large spectral radii, thereby improving the error floor performance.

\subsection{Simulation Results}

We evaluate whether the proposed criteria can be used to improve finite-length QC-LDPC codes. Since all the substructures are formed by intersecting short cycles, we use the QC-PEG-CYCLE algorithm \cite{Xiong2} to construct QC-LDPC codes. Under the same girth, base matrix, and lifting degree, we aim to construct QC‑LDPC codes with fewer target substructures to verify the theoretical results. For compactness, let $F_1,F_2,$ and $F_3$ denote the substructures of Tanner graph, whose VN graphs are $\theta(2,2,2)$, $\theta(1,3,3)$, and $D(4,4;0)$, respectively. 
Code $E_1$ is a (4, 7)-regular QC-LDPC code with girth 8 and lifting degree 77 which is constructed primarily to suppress $F_1,F_2,$ and $F_3$.  
Code $E_2$ is a (4, 8)-regular QC-LDPC code with girth 6 and lifting degree 35 which is constructed primarily to decrease the number of $T_1,T_2,$ and $T_3$. 
The exponent matrices of $E_1$ and $E_2$ are $H_1$ and $H_2$. The resulting finite-lift codes are therefore described as structure-suppressed, rather than strictly structure-free.

$$
H_1=\begin{bmatrix}
40 & 64 & 70 & 20 & 47 & 11 & 75 \\
68 & 0 & 8 & 4 & 39 & 51 & 2 \\
65 & 36 & 12 & 50 & 40 & 47 & 13 \\
54 & 29 & 49 & 59 & 2 & 18 & 14
\end{bmatrix}
$$
$$
H_2 =
\begin{bmatrix}
4 & 8 & 6 & 19 & 32 & 22 & 10 & 3 \\
33 & 10 & 10 & 26 & 19 & 19 & 1 & 6 \\
25 & 21 & 23 & 31 & 22 & 13 & 21 & 31 \\
23 & 16 & 1 & 20 & 14 & 29 & 32 & 8
\end{bmatrix}
$$

For $E_1$, the comparison set consists of codes constructed by the PEG algorithm \cite{PEG}, the GCD-based method \cite{GCD}, and the reversal of dual sequences method \cite{sfxl}. 
For $E_2$, the comparison set contains one code constructed by the PEG algorithm \cite{PEG} and two codes constructed using the ``Simple Form'' method in \cite{simpleform}. Table \ref{tab:code-structures1} and \ref{tab:code-structures2} reports the numbers of cycles and target substructures. 

\begin{table}[!t]
\caption{Numbers of $C_6$, $C_8$, $F_1$, $F_2$, and $F_3$ contained in $E_1$ and its comparison set.}
\label{tab:code-structures1}
\centering
\footnotesize
\setlength{\tabcolsep}{5.0pt}
\begin{tabular}{c|cc|ccc}
\hline
Code & $N_6$ & $N_8$ & $N_{F_1}$ & $N_{F_2}$ & $N_{F_3}$  \\ \hline
$E_1$ & 0 & 12320 & \textbf{1386} & \textbf{85316} & \textbf{376761} \\
RDS   & 0 & 25718 & 8932 & 374528 & 1759142 \\
PEG  & 924 & 16247 & 3465 & 149226 & 625779 \\
GCD   & 0 & 24486 & 6622 & 415415 & 1482096\\ \hline
\end{tabular}
\end{table}

\begin{table}[!t]
\caption{Numbers of $C_6$, $C_8$, $T_1$, $T_2$, and $T_3$ contained in $E_2$ and its comparison set.}
\label{tab:code-structures2}
\centering
\footnotesize
\setlength{\tabcolsep}{5.0pt}
\renewcommand{\arraystretch}{1}
\begin{tabular}{c|cc|ccc}
\hline
Code & $N_6$ & $N_8$ &  $N_{T_1}$ & $N_{T_2}$ & $N_{T_3}$ \\ \hline
$E_2$ & 525 & 29470 & \textbf{105} & \textbf{1575} & \textbf{1575} \\
PEG  & 1260 & 24395 & 420 & 6265 & 5775 \\
SF1,2   & 1750 & 26880 & 1680 & 19285 & 18725 \\
\hline
\end{tabular}
\end{table}
\begin{table}[!t]
\caption{Common simulation settings.}
\label{tab:simulation-settings}
\centering
\footnotesize
\setlength{\tabcolsep}{3pt}
\renewcommand{\arraystretch}{0.98}
\begin{tabular}{@{}cc@{}}
\hline
Parameter & Setting \\
\hline
Signal & All-zero word; BPSK\\
Channel & AWGN \\
Decoder & Layered normalized min-sum\\
Actual rates & $R_1=143/280$ and $R_2=234/539$ \\
Noise variance & $\sigma^2=[2R\,10^{(E_b/N_0)_{\rm dB}/10}]^{-1}$ \\
Per-SNR stop & $500$ errors or $10^8$ frames \\
\hline
\end{tabular}
\end{table}

The structural differences are substantial.  Relative to the best code in its comparison set, $E_1$ reduces $N_{F_1},N_{F_2},$ and $N_{F_3}$ by $60.0\%$, $42.8\%$, and $39.8\%$, respectively.  Likewise, $E_2$ reduces $N_{T_1},N_{T_2},$ and $N_{T_3}$ by $75.0\%$, $74.9\%$, and $72.7\%$, respectively. The common simulation settings are summarized in Table~\ref{tab:simulation-settings}.  The FER curves of the two code groups are shown in Figs.~\ref{FZ1} and~\ref{FZ2}.

\begin{figure}[!h]
\centering

\includegraphics[width=3in]{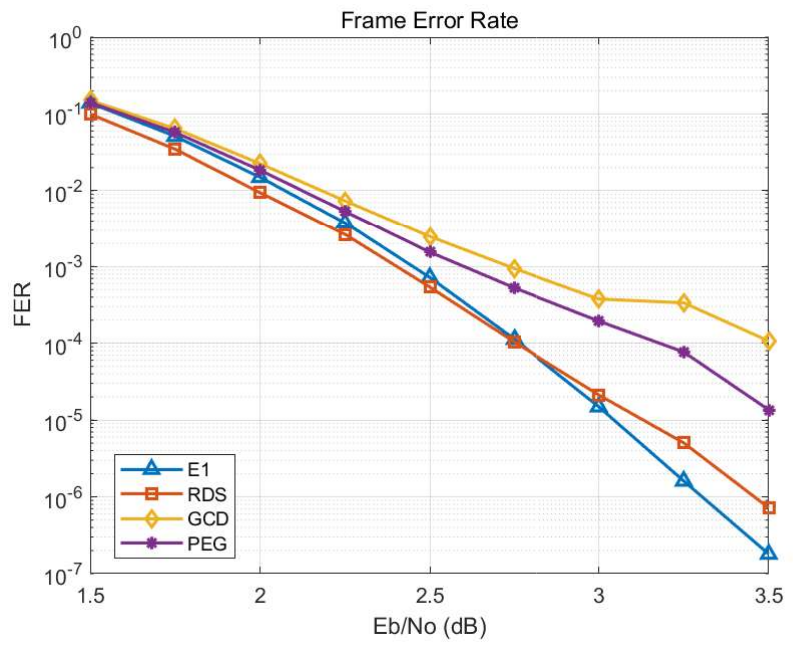}
\caption{FER performance of $E_1$ and its counterparts.}

\label{FZ1}
\end{figure}
\begin{figure}[!h]
\centering

\includegraphics[width=3in]{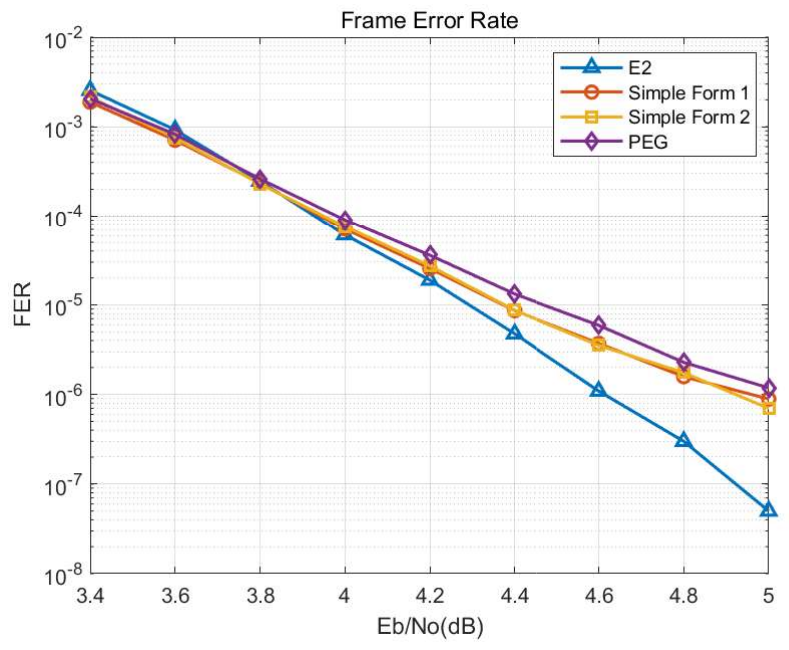}
\caption{FER performance of $E_2$ and its counterparts.}
\label{FZ2}
\end{figure}


Thus, the simulation results demonstrate the practical benefit of suppressing the targeted substructures, while the FER curves test whether it yields error-floor improvement.

\section{Conclusion}

In this paper, we derive the Tur\'{a}n numbers for several special graphs, including theta graphs, dumbbell graphs, and short cycles with chords, and translate these extremal results into structural constraints on ETSs when the corresponding subgraphs are forbidden in the VN graph. For a variable-regular LDPC code with girth \(g=8\) and variable degree \(d_L(v)=\gamma\), if configurations in which any two 8-cycles share a common variable node are excluded, then every \((a,b)\)-ETS must satisfy the bound \(b \geq a\gamma - \frac{a(\sqrt{24a-23}-1)}{4}\), which improves existing results on the minimum size of ETSs. In addition, avoiding certain types of short-cycle structures with chords, namely \(T_1\), \(T_2\), and \(T_3\), in the Tanner graph can effectively eliminate some classes of ETSs. These results provide useful theoretical support for the construction and performance analysis of LDPC codes. Furthermore, the spectral-radius analysis and simulation results demonstrate that the QC-LDPC codes constructed in this paper achieve a lower error floor, indicating improved performance.

\appendices

\section{}
\subsection{Proof of Lemma~\ref{lem2}}

\GtwoLemma*

\begin{IEEEproof}
    For 1), where $n = 6$, it is easy to check that the graph $G$ obtained by deleting any $3$ edges from $K_6$ is not $G_2$-free. Hence $ex(6, G_2) \leq 11$. On the other hand, if we remove any $4$ edges from $K_6$, only $G'_2$ is $G_2$-free, so $ex(6, G_2) \geq 11$. This proves that $ex(6, G_2) = 11$, and $|E(G)| = 11$ if and only if $G = G'_2$.

    \begin{figure}[!h]
    \centering
    \includegraphics[width=1in]{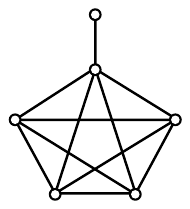}
    \caption{$G'_2$ is the 6-order extremal graph of $G_2$.}
    \label{fig6}
    \end{figure}

    For 2), suppose that $G$ is a $G_2$-free graph with $7$ vertices and 14 edges.
    
    First we show that the minimum degree of $G$, denoted by $d_1$, must be $4$. If $d_1 \geq 5$, then $|E(G)| \geq \frac{5 \times 7}{2} > 17$, which is a contradiction. If $d_1 \leq 3$, consider the graph $G - v_1$, which satisfies $|E(G - v_1)| \geq 14 - 3 = 11$. Since $G - v_1$ is also $G_2$-free, we have $G - v_1 = G'$. Note that the minimum degree of $G$ is $d_1 = 3$ while the minimum degree of $G - v_1$ is $1$, which is a contradiction. Thus $d_1 = 4$.

    Next we prove that such a graph $G$ must contain $G_2$ as a subgraph.  Because of $d_1=4$, the degree sequence of $G$ must be $\pi_G = (4,4,4,4,4,4,4)$. Let $v_1$ and $v_2$ be two non-adjacent vertices. We distinguish two cases.
    
    \textbf{Case 1:} $N_G(v_1) = N_G(v_2)$.
    
    Suppose that $N_G(v_1) = N_G(v_2) = \{v_3, v_4, v_5, v_6\}$. Since $d_7 = 4$, we have $N_G(v_7) = \{v_3, v_4, v_5, v_6\}$. To satisfy the degree sequence, there must be two non-adjacent edges among $\{v_3, v_4, v_5, v_6\}$, say $v_3v_4$ and $v_5v_6$. Consequently, the graph $G$ must be isomorphic to the graph shown in Fig.~\ref{fig7}, which contains $G_2$ as a subgraph.
    \begin{figure}[!h]
    \centering
    \includegraphics[width=1.5in]{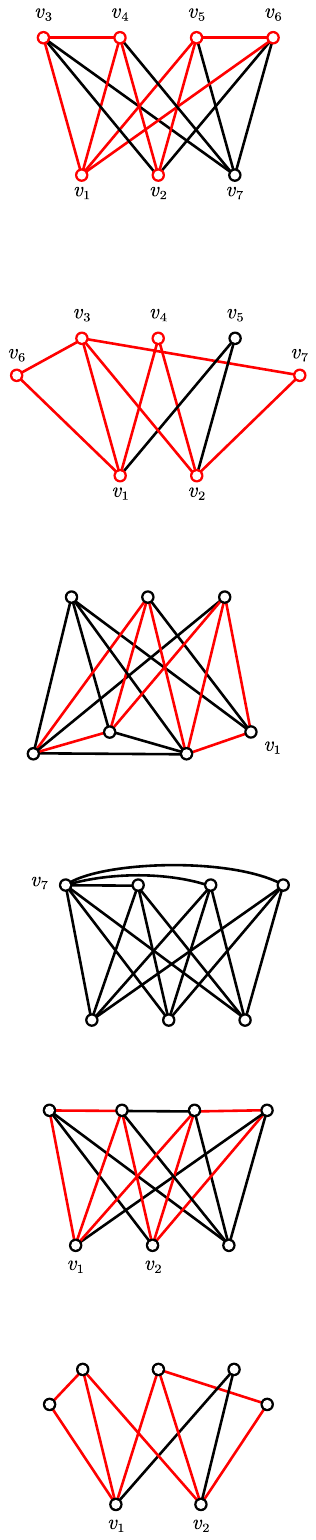}
    \caption{The graph $G$ satisfying $N_G(v_1) = N_G(v_2)$.}
    \label{fig7}
    \end{figure}

    \textbf{Case 2:} $N_G(v_1) \neq N_G(v_2)$.

    It is easy to see that $|N_G(v_1) \cap N_G(v_2)| = 3$. Suppose that $N_G(v_1) = \{v_3, v_4, v_5, v_6\}$ and $N_G(v_2) = \{v_3, v_4, v_5, v_7\}$. Noting that $d_6 = d_7 = 4$ and vertices $v_6$ and $v_7$ must share at least one common neighbour, let $v_3 \in N_G(v_6) \cap N_G(v_7)$. As illustrated in Fig.~\ref{fig8}, $G$ necessarily contains $G_2$ as a subgraph, which implies $|E(G)|\leq 13$.
    \begin{figure}[!h]
    \centering
    \includegraphics[width=1.5in]{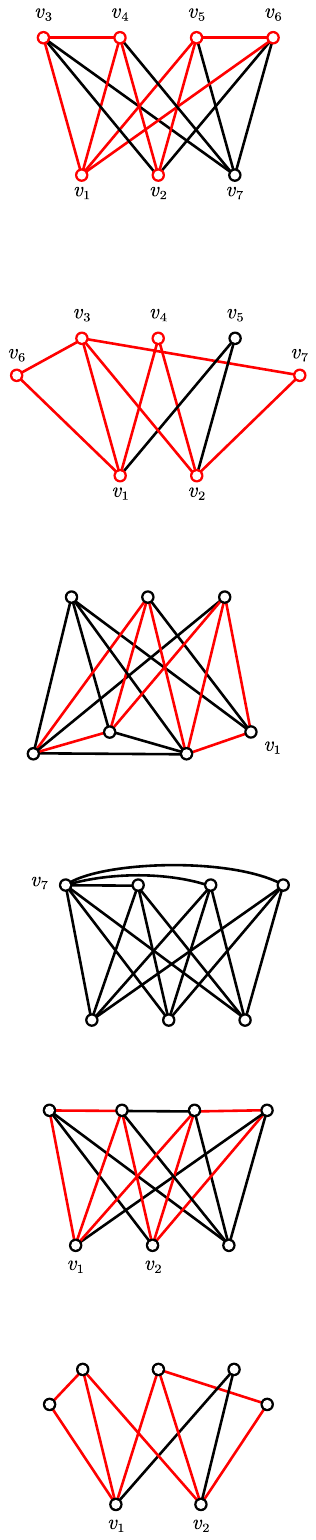}
    \caption{The graph $G$ satisfying $N_G(v_1) \neq N_G(v_2).$}
    \label{fig8}
    \end{figure}

    Further considering $|E(G)| = 13$, we can similarly obtain that the minimum degree $2\leq d_1 \leq 3$. We discuss the two cases separately.

    \textbf{Case 1:} $d_1 = 2$.
    
    Since $d_1=2$, then $|E(G - v_1)| = 13 - 2 = 11 = ex(6, G_2)$. By 1) we have $G - v_1 = G_2'$. Under the condition of being $G_2$-free, the graph $G$ must be obtained from $G_2'$ by adding a vertex of degree $d_1=2$. Consequently, $G=G_2''$ or $G = G_2'''$, both of which are extremal graphs of order $7$ for $G_2$.

    \begin{figure}[!h]
    \centering
    \includegraphics[width=2.5in]{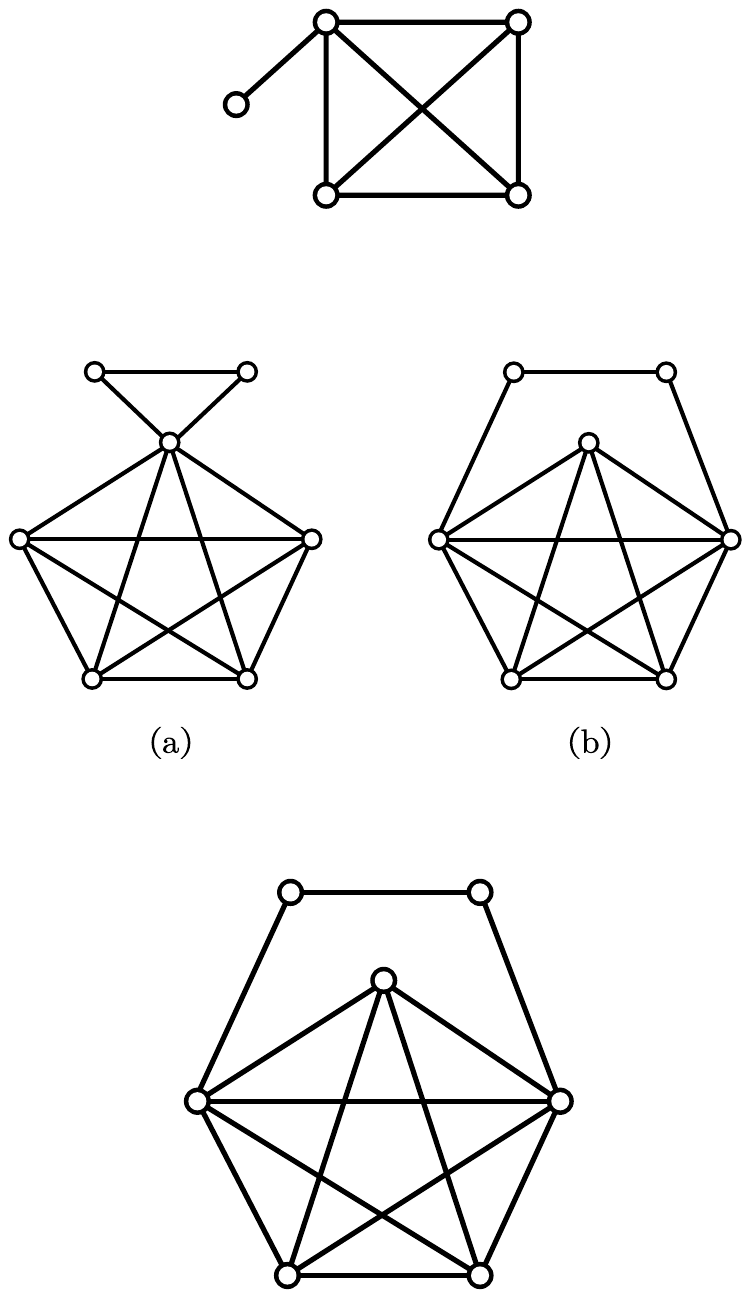}
    \caption{$G''_2$ and $G'''_2$ are 7-order extremal graphs of $G_2$.}
    \end{figure}

    \textbf{Case 2:} $d_1 = 3$.

    Then the degree sequence of $G$ is $\pi_G = (3,3,4,4,4,4,4)$ or $\pi_G = (3,3,3,3,4,5,5)$. 

    \begin{figure}[!h]
    \centering
    \includegraphics[width=3in]{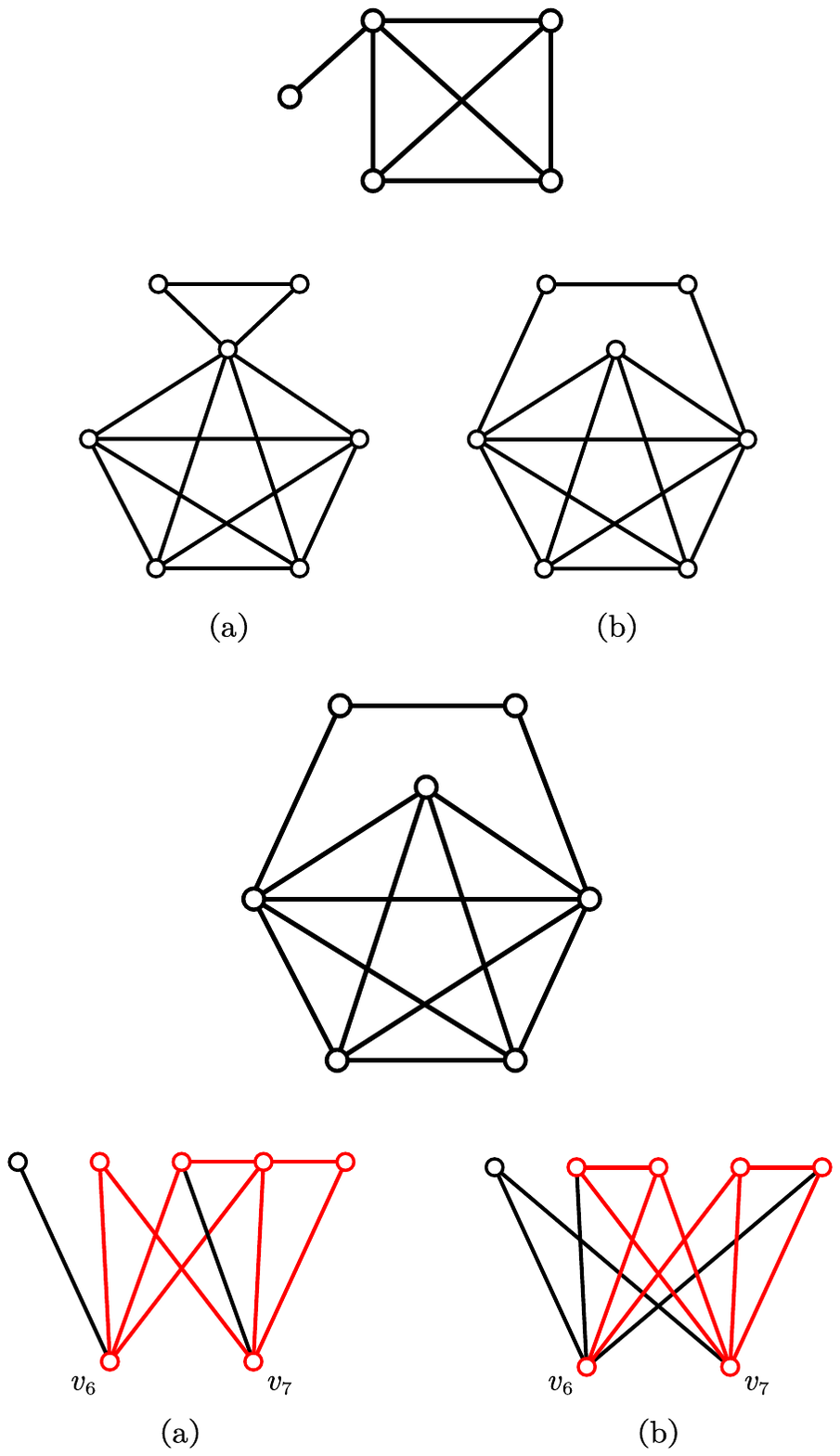}
    \caption{Figure (a) depicts the graph $G$ satisfying $|N_G(v_6) \cap N_G(v_7)| = 3$ in Subcase 2.1; Figure (b) depicts the graph $G$ satisfying $v_6v_7 \notin E(G)$ in Subcase 2.2.}
    \label{fig_add}
    \end{figure}
    
    \textbf{Subcase 2.1:} $\pi_G = (3,3,4,4,4,4,4)$.

    In this subcase, $G$ must have two non-adjacent vertices of degree 4, say $v_6$ and $v_7$. If $|N_G(v_6)\cap N_G(v_7)|=3$, then $G$ necessarily contains a copy of $G_2$ as illustrated in Fig.~\ref{fig_add} (a). Otherwise, if $|N_G(v_6) \cap N_G(v_7)| = 4$, then $G$ must be isomorphic to the graph $(K_2+\overline{K_2}) \vee \overline{K_3} $, which is $G_2$-free. 

    \textbf{Subcase 2.2:} $\pi_G = (3,3,3,3,4,5,5)$.

    In this subcase, $G$ must have two vertices with degree 5, say $v_6$ and $v_7$. If $v_6v_7\notin E(G)$, then $G$ necessarily contains a copy of $G_2$ as illustrated in Fig.~\ref{fig_add} (b). Otherwise, if $v_6v_7\in E(G)$, then $G$ must be isomorphic to the graph $(K_2+K_1) \vee \overline{K_4}$, which is $G_2$-free. 
    
    In summary, we have proved that $|E(G)| \leq 13$, and $|E(G)| = 13$ if and only if $G = (K_2 + \overline{K}_{2} ) \vee \overline{K}_{3}$, $G = (K_1 + \overline{K}_{2} ) \vee \overline{K}_{4}$, $G  = G_2''$, $G = G_2'''$. 

    For 3), suppose $G$ is a graph of order $8$ that contains no $G_2$ as a subgraph.
    
    We first show that $|E(G)| \leq 17$. Assume for contradiction that $|E(G)| = 18$. Then the minimum degree of $G$, denoted by $d_1$, satisfies $d_1\leq 4$. Indeed, if $d_1 \geq 5$, then $|E(G)| \geq \frac{5 \times 8}{2} = 20$, which is a contradiction; consider the graph $G-v_1$ satisfying $|E(G-v_1)| \geq 18 - 4 = 14 > ex(7, G_2)$, which contradicts the fact that $G$ is $G_2$-free. Hence $|E(G)| \leq 17$ and $d_1=4$.

    By 2), we can know $G-v_1 = (K_2 + \overline{K_2}) \vee \overline{K_3}$, $(K_2 + K_1) \vee \overline{K_3}$,  $G_2''$, or $G_2'''$. Under the condition of being $G_2$-free, the graph $G$ must be obtained from the 7-order extremal graphs by adding a vertex of degree $d_1=4$. Consequently, $G = (K_2 +\overline{K_2}) \vee \overline{K_4}$ is the only extremal graph of order $8$ for $G_2$.

\end{IEEEproof}
\subsection{Proof of Lemma~\ref{lem4}}

\GfourLemma*

\begin{IEEEproof}
    For 1), where $n = 6$, it is easy to check that the graph $G$ obtained by deleting any $2$ edges from $K_6$ is not $G_3$-free. Hence $ex(6, G_3) \leq 12$. On the other hand, if we remove any $3$ edges from $K_6$, only $K_3\vee\overline{K_3}$ is $G_3$-free, so $ex(6, G_3) \geq 12$. This proves that $ex(6, G_3) = 12$, and $|E(G)| = 12$ if and only if $K_3\vee\overline{K_3}$ .    

    For 2), suppose that $G$ is a $G_3$-free graph with $7$ vertices and 16 edges.
    
    First we show that the minimum degree of $G$, denoted by $d_1$, must be $4$. If $d_1 \geq 5$, then $|E(G)| \geq \frac{5 \times 7}{2} > 17$, which is a contradiction. If $d_1 \leq 3$, consider the graph $G - v_1$, which satisfies $|E(G - v_1)| \geq 16 - 3 = 13$, which is a contradiction. Thus $d_1 = 4$.

    Next we prove that such a graph $G$ must contain $G_3$ as a subgraph. Consider the graph $G - v_1$, which satisfies $|E(G - v_1)| = 16 - 4 = 12$. By 1), we obtain $G - v_1 = K_3 \vee \overline{K_3}$. The graph $G$ must be obtained from $K_3 \vee \overline{K_3}$ by adding a vertex of degree $4$. Consequently, the graph $G$ must be isomorphic to the graph shown in Fig.~\ref{fig9}, which contains $G_3$ as a subgraph, which implies $|E(G)|\leq15$.

    \begin{figure}[!h]
    \centering
    \includegraphics[width=1.5in]{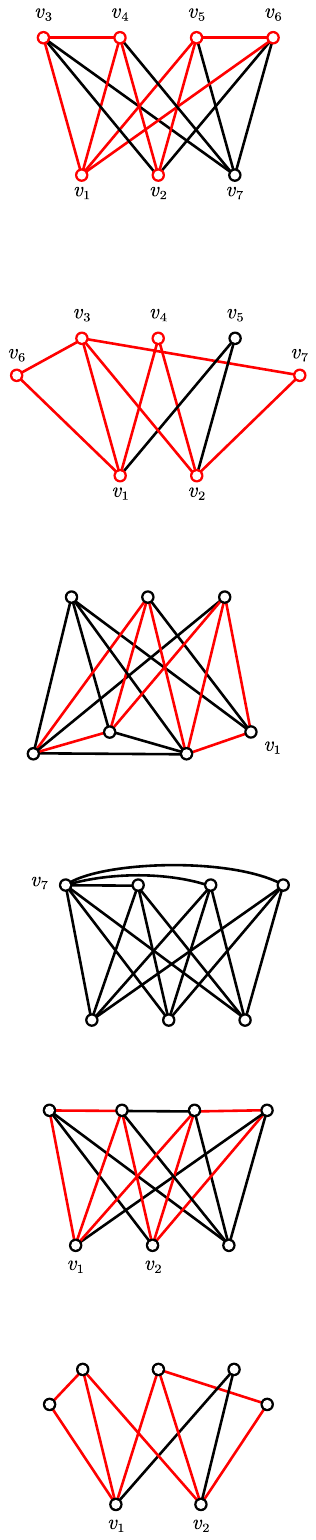}
    \caption{The graph $G$ satisfying $d_1=4$ and $G - v_1 = K_3 \vee \overline{K_3}$.}
    \label{fig9}
    \end{figure}
    
    Further considering $|E(G)| = 15$, we can similarly obtain that the minimum degree $3\leq d_1 \leq 4$. We discuss the two cases separately.

    \textbf{Case 1:} $d_1 = 3$.
    
    Since $d_1=3$, then $|E(G - v_1)| = 15 - 3 = 12 = ex(6, G_3)$. By 1) we have $G - v_1 = K_3 \vee \overline{K_3}$. Under the condition of being $G_3$-free, the graph $G$ must be obtained from $K_3 \vee \overline{K_3}$ by adding a vertex of degree $d_1=3$. Consequently, $G = (K_3 + K_1) \vee \overline{K_3}$ or $G = K_3 \vee \overline{K_4}$, both of which are extremal graphs of order $7$ for $G_3$.

    \textbf{Case 2:} $d_1 = 4$.

    Then the degree sequence of $G$ is either $\pi_G = (4,4,4,4,4,4,6)$ or $\pi_G = (4,4,4,4,4,5,5)$. We consider the following two subcases.

    \textbf{Subcase 2.1:} $\pi_G = (4, 4, 4, 4, 4, 4, 6)$.

    It is easy to see that $G[V \setminus \{v_7\}]$ is a $3$-regular graph of order 6. By Lemma~\ref{lem3}, $G[V \setminus \{v_7\}]$ is isomorphic either to $K_{3,3}$ or to $H_3$. Since $G$ contains no $G_3$ as a subgraph, we must have $G[V \setminus \{v_7\}] = K_{3,3}$, and consequently $G = S_4 \vee \overline{K_3}$, which is also an extremal graph of order $7$ for $G_3$.

    \begin{figure}[!h]
    \centering
    \includegraphics[width=1.5in]{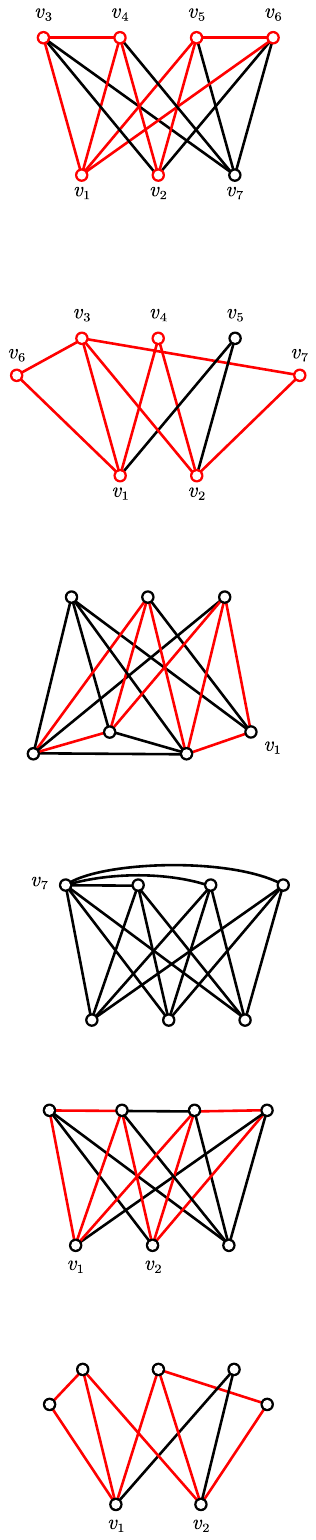}
    \caption{$G = S_4 \vee \overline{K_3}$ is the 7-order extremal graph of $G_3$.}
    \label{fig_sim}
    \end{figure}

    \textbf{Subcase 2.2:} $\pi_G = (4, 4, 4, 4, 4, 5, 5)$.

    In this subcase, $G$ must have two non‑adjacent vertices of degree $4$, say $v_1$ and $v_2$. If $|N_G(v_1) \cap N_G(v_2)| = 4$, then $G$ must be isomorphic to the graph Fig.~\ref{fig_11} (a), which is not $G_3$-free. Otherwise, if $|N_G(v_1) \cap N_G(v_2)| = 3$, then $G$ necessarily contains a copy of $G_3$ as illustrated in Fig.~\ref{fig_11} (b). Therefore, under this degree sequence there exists no graph $G$ satisfying the property of $G_3$-free. 
     \begin{figure}[!h]
    \centering
    \includegraphics[width=3.2in]{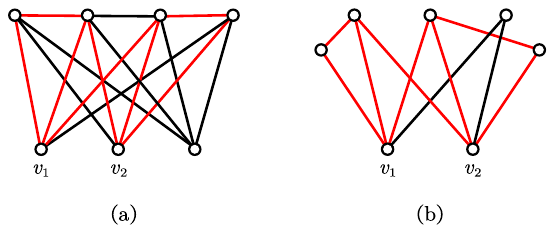}
    \caption{Figures (a) and (b) depict the graph $G$ satisfying $|N_G(v_1) \cap N_G(v_2)| = 4$ and $|N_G(v_1) \cap N_G(v_2)| = 3$ respectively.}
    \label{fig_11}
    \end{figure}

    In summary, we have proved that $|E(G)| \leq 15$, and $|E(G)| = 15$ if and only if $G = (K_3 + K_1) \vee \overline{K_3}$, $G = K_3 \vee \overline{K_4}$, or $G = S_4 \vee \overline{K_3}$.  

    For 3), suppose $G$ is a graph of order $8$ that contains no $G_3$ as a subgraph.
    
    We first show that $|E(G)| \leq 19$. Assume for contradiction that $|E(G)| = 20$. Then the minimum degree of $G$, denoted by $d_1$, must be $5$. Indeed, if $d_1 \geq 6$, then $|E(G)| \geq \frac{6 \times 8}{2} = 24$, which is a contradiction; if $d_1 \leq 4$, consider the graph $G-v_1$ satisfying $|E(G-v_1)| \geq 20 - 4 = 16 > ex(7, G_3)$, which contradicts the fact that $G$ is $G_3$-free. Hence $d_1 = 5$.

    Now consider the graph $G-v_1$. Since $|E(G-v_1)| = 20 - 5 = 15 = ex(7, G_3)$, by 2) we have $G-v_1 = (K_3 + K_1) \vee \overline{K_3}$ or $K_3 \vee \overline{K_4}$, or $S_4 \vee \overline{K_3}$. Note that the minimum degree of $G$ is $d_1 = 5$ while $G-v_1 = (K_3 + K_1) \vee \overline{K_3}$ or $K_3 \vee \overline{K_4}$, or $S_4 \vee \overline{K_3}$, which is a contradiction. Thus $|E(G)| \leq 19$.
    
    However, taking into account the minimum degree of $G$ and the degree sequences of these three graphs, one checks that $G-v_1$ cannot coincide with any of these extremal graphs of order $7$ for $G_3$. This yields a contradiction.

    Now suppose $|E(G)| = 19$. Similarly we can know that its minimum degree must be $d_1 = 4$, and $G-v_1 = (K_3 + K_1) \vee \overline{K_3}$, $K_3 \vee \overline{K_4}$, or $S_4 \vee \overline{K_3}$. Under the condition of being $G_3$-free, the graph $G$ must be obtained from the 8-order extremal graphs by adding a vertex of degree $d_1=4$. Consequently, $G = (K_3 + K_1) \vee \overline{K_4}$ or $G = S_4 \vee \overline{K_4}$. Both graphs have $19$ edges and are $G_3$-free; hence they are extremal graphs of order $8$ for $G_3$.
\end{IEEEproof}

\section*{Acknowledgment}

The authors would like to thank the editors and reviewers for their valuable comments and suggestions.

\bibliographystyle{IEEEtranN}
\bibliography{main}

@article{sfxl,
  author = { Guohua Zhang and  Jiayu Yu and  Yuan Zhou },
  title = { (4, L) Regular Girth-8 {QC-LDPC} Short Codes Based on Reversal of Dual Sequences},
  journal = {Mobile Communications},
  volume = {48},
  number = {05},
  pages = {27-31},
  year = {2024},
  issn = {1006-1010},
}

@ARTICLE{GCD,
  author={Zhang, Guohua and Hu, Yulin and Fang, Yi and Ren, Defeng},
  journal={IEEE Communications Letters}, 
  title={Relation Between {GCD} Constraint and Full-Length Row-Multiplier {QC-LDPC} Codes With Girth Eight}, 
  year={2021},
  volume={25},
  number={9},
  pages={2820-2823},
  doi={10.1109/LCOMM.2021.3096386}}

@ARTICLE{PEG,
  author={Xiaoyu Hu and Eleftheriou, E. and Arnold, D.M.},
  journal={IEEE Transactions on Information Theory}, 
  title={Regular and irregular progressive edge-growth tanner graphs}, 
  year={2005},
  volume={51},
  number={1},
  pages={386-398},
  doi={10.1109/TIT.2004.839541}}

@article{simpleform,
  author    = {Singh, Jasvinder and Gupta, Manish and Bhullar, Jaskarn Singh},
  title     = {On the search of smallest {QC-LDPC} code with girth six and eight},
  journal   = {Cryptography and Communications},
  year      = {2019},
  volume    = {12},
  number    = {},
  pages     = {711-723},
  doi       = {10.1007/s12095-019-00405-2},
}

@ARTICLE{Xiong2,
  author={Xiong, Haoran and Wang, Guanghui and Ma, Zhiming and Yan, Guiying},
  journal={IEEE Transactions on Information Theory}, 
  title={The Impact of the Distance Between Cycles on Elementary Trapping Sets}, 
  year={2026},
  volume={72},
  number={1},
  pages={298-314},
  doi={10.1109/TIT.2025.3631564}}

@electronic{nauty,
  author        = "Brendan McKay",
  title         = "nauty User’s Guide",
  url           = {https://users.cecs.anu.edu.au/~bdm/nauty/},
  year          = "2009",
  key           = "IEEE"
}

@ARTICLE{GallagerLDPC,
  author={Gallager, R.},
  journal={IRE Transactions on Information Theory}, 
  title={Low-density parity-check codes}, 
  year={1962},
  volume={8},
  number={1},
  pages={21-28},
  doi={10.1109/TIT.1962.1057683}}

@ARTICLE{Kari,
  author={Karimi, Mehdi and Banihashemi, Amir H.},
  journal={IEEE Transactions on Information Theory}, 
  title={On Characterization of Elementary Trapping Sets of Variable-Regular {LDPC} Codes}, 
  year={2014},
  volume={60},
  number={9},
  pages={5188-5203},
  doi={10.1109/TIT.2014.2334657}}

@ARTICLE{McGregor,
author = {McGregor, Andrew and Milenkovic, Olgica},
title = {On the hardness of approximating stopping and trapping sets},
year = {2010},
publisher = {IEEE Press},
volume = {56},
number = {4},
issn = {0018-9448},
doi = {10.1109/TIT.2010.2040941},
journal = {IEEE Transactions on Information Theory},
pages = {1640-1650},
numpages = {11}
}

@ARTICLE{TaoXiongfe,
  author={Tao, Xiongfei and Li, Yufei and Liu, Yonghe and Hu, Zuoqi},
  journal={IEEE Communications Letters}, 
  title={On the Construction of {LDPC} Codes Free of Small Trapping Sets by Controlling Cycles}, 
  year={2018},
  volume={22},
  number={1},
  pages={9-12},
  doi={10.1109/LCOMM.2017.2679707}}

@ARTICLE{NaseriBaniha,
  author={Naseri, Sima and Banihashemi, Amir H.},
  journal={IEEE Transactions on Communications}, 
  title={Construction of Time Invariant Spatially Coupled {LDPC} Codes Free of Small Trapping Sets}, 
  year={2021},
  volume={69},
  number={6},
  pages={3485-3501},
  doi={10.1109/TCOMM.2021.3060797}}

@ARTICLE{ChunLDPC,
  author={Sae-Young Chung and Forney, G.D. and Richardson, T.J. and Urbanke, R.},
  journal={IEEE Communications Letters}, 
  title={On the design of low-density parity-check codes within 0.0045 {dB} of the {S}hannon limit}, 
  year={2001},
  volume={5},
  number={2},
  pages={58-60},
  doi={10.1109/4234.905935}}

@INPROCEEDINGS{ChilaER,
  author={Chilappagari, Shashi Kiran and Sankaranarayanan, Sundararajan and Vasic, Bane},
  booktitle={2006 IEEE International Conference on Communications}, 
  title={Error Floors of {LDPC} Codes on the Binary Symmetric Channel}, 
  year={2006},
  volume={3},
  number={},
  pages={1089-1094},
  doi={10.1109/ICC.2006.254892}}

@article{cole2006,
    title={A General Method for Finding Low Error Rates of {LDPC} Codes}, 
    author={Chad A. Cole and Stephen G. Wilson and Eric. K. Hall and Thomas R. Giallorenzi},
    year={2006},
    journal={arXiv preprint cs/0605051},
    eprint={cs/0605051},
    archivePrefix={arXiv},
    primaryClass={cs.IT},
}

@ARTICLE{Milenkovic,
  author={Milenkovic, Olgica and Soljanin, Emina and Whiting, Philip},
  journal={IEEE Transactions on Information Theory}, 
  title={Asymptotic Spectra of Trapping Sets in Regular and Irregular {LDPC} Code Ensembles}, 
  year={2007},
  volume={53},
  number={1},
  pages={39-55},
  doi={10.1109/TIT.2006.887060}}

@ARTICLE{Butler1,
  author={Butler, Brian K. and Siegel, Paul H.},
  journal={IEEE Transactions on Information Theory}, 
  title={Error Floor Approximation for {LDPC} Codes in the {AWGN} Channel}, 
  year={2014},
  volume={60},
  number={12},
  pages={7416-7441},
  doi={10.1109/TIT.2014.2363832}}

@ARTICLE{HashemAmir,
  author={Hashemi, Yoones and Banihashemi, Amir H.},
  journal={IEEE Communications Letters}, 
  title={Lower Bounds on the Size of Smallest Elementary and Non-Elementary Trapping Sets in Variable-Regular {LDPC} Codes}, 
  year={2017},
  volume={21},
  number={9},
  pages={1905-1908},
  doi={10.1109/LCOMM.2017.2707553}}

@INPROCEEDINGS{XiongYeYan,
  author={Xiong, Haoran and Ye, Zicheng and Zhang, Huazi and Wang, Jun and Liu, Ke and Yin, Dawei and Wang, Guanghui and Yan, Guiying and Ma, Zhiming},
  booktitle={2024 IEEE Information Theory Workshop (ITW)}, 
  title={Theoretical Bounds for the Size of Elementary Trapping Sets by Graph Theory Methods}, 
  year={2024},
  volume={},
  number={},
  pages={193-198},
  doi={10.1109/ITW61385.2024.10806922}}

@ARTICLE{theta,
  author={Zhai, M and Fang, L and Shu,J},
  journal={Graphs and Combinatorics}, 
  title={On the {T}ur\'{a}n Number of Theta Graphs}, 
  year={2021},
  volume={37},
  page={2155–2165},
}

@ARTICLE{turan,
  author={Tur\'{a}n, P},
  journal={Matematikai e\'{s} Fizikai Lapok}, 
  title={On an Extremal Problem in Graph Theory}, 
  year={1941},
  volume={48},
  pages={436-452},
}

@INPROCEEDINGS{KoetterVN,
  author={Koetter, R. and Li, W.-C.W. and Vontobel, P.O. and Walker, J.L.},
  booktitle={Information Theory Workshop}, 
  title={Pseudo-codewords of cycle codes via zeta functions}, 
  year={2004},
  volume={},
  number={},
  pages={7-12},
  doi={10.1109/ITW.2004.1405265}}

@ARTICLE{Tanner1,
  author={Tanner, R.},
  journal={IEEE Transactions on Information Theory}, 
  title={A recursive approach to low complexity codes}, 
  year={1981},
  volume={27},
  number={5},
  pages={533-547},
  doi={10.1109/TIT.1981.1056404}}

@INPROCEEDINGS{AmirzadePho,
  author={Amirzade, Farzane and Sadeghi, Mohammad-Reza and Panario, Daniel},
  booktitle={2022 17th Canadian Workshop on Information Theory (CWIT)}, 
  title={Protograph-based {LDPC} codes with chordless short cycles and large minimum distance}, 
  year={2022},
  volume={},
  number={},
  pages={16-20},
  doi={10.1109/CWIT55308.2022.9817675}}

@ARTICLE{Amirzade,
  author={Amirzade, Farzane and Sadeghi, Mohammad-Reza and Panario, Daniel},
  journal={IEEE Transactions on Information Theory}, 
  title={Construction of Protograph-Based {LDPC} Codes With Chordless Short Cycles}, 
  year={2024},
  volume={70},
  number={1},
  pages={51-74},
  doi={10.1109/TIT.2023.3307583}}

@ARTICLE{Sadeghi,
  author={Sadeghi, Mohammad-Reza and Amirzade, Farzane},
  journal={IEEE Communications Letters}, 
  title={Edge-Coloring Technique to Analyze Elementary Trapping Sets of Spatially-Coupled {LDPC} Convolutional Codes}, 
  year={2020},
  volume={24},
  number={4},
  pages={711-715},
  doi={10.1109/LCOMM.2019.2962671}}

@ARTICLE{NaseriSiAm,
  author={Naseri, Sima and Banihashemi, Amir H.},
  journal={IEEE Communications Letters}, 
  title={Construction of Girth-8 {QC-LDPC} Codes Free of Small Trapping Sets}, 
  year={2019},
  volume={23},
  number={11},
  pages={1904-1908},
  doi={10.1109/LCOMM.2019.2933828}}

\end{document}